\journal{Expert Systems With Applications}
\begin{document}
\newtheorem{corollary}{Corollary}
\newtheorem{problem}{Problem}
\newtheorem{observation}{Observation}
\newtheorem{example}{Example}
\newtheorem{lemma}{Lemma}
\newtheorem{theorem}{Theorem}
\newtheorem{defin}{Definition}

\newcommand{\single}{{\sc SingleTeam}}
\newcommand{\partition}{{\tt Faultline-Partitioning}}
\newcommand{\reassignteams}{\ensuremath{\texttt{ReassignTeams}}}
\newcommand{\assigncosts}{\ensuremath{\texttt{AssignCosts}}}
\newcommand{\iterative}{{\tt Iterative}}
\newcommand{\plusiterative}{{\tt Iterative+}}
\newcommand{\minusiterative}{{\tt Iterative-}}
\newcommand{\homo}{{\tt Homogeneous}}
\newcommand{\hetro}{{\tt Heterogeneous}}
\newcommand{\hybrid}{{\tt Hybrid}}
\newcommand{\networkx}{{NetworkX}}
\newcommand{\freelancer}{{\sl Freelancer}}
\newcommand{\bc}{{\ensuremath{\mathbf{c}}}}
\newcommand{\bT}{{\ensuremath{\mathbf{T}}}}
\newcommand{\aggf}{{\ensuremath{r}}}
\newcommand{\worker}{{\ensuremath{\mathbf{w}}}}
\newcommand{\workers}{{\ensuremath{W}}}
\newcommand{\sign}{{\ensuremath{\texttt{sign}}}}
\newcommand{\calW}{{\ensuremath{\mathcal{W}}}}
\newcommand{\calR}{{\ensuremath{\mathcal{R}}}}
\newcommand{\calQ}{{\ensuremath{\mathcal{Q}}}}
\newcommand{\calP}{{\ensuremath{\mathcal{P}}}}
\newcommand{\calN}{{\ensuremath{\mathcal{N}}}}
\newcommand{\calF}{CT}
\newcommand{\calA}{{\ensuremath{\mathcal{A}}}}
\newcommand{\matching}{\ensuremath{\texttt{FaultlineSplitter}}}
\newcommand{\genmatching}{\ensuremath{\texttt{MemberExchange}}}
\newcommand{\grd}{\ensuremath{\texttt{Greedy}}}
\newcommand{\nbt}{\ensuremath{\texttt{TriadicM-F}}}
\newcommand{\npt}{\ensuremath{\texttt{TriadicM-P}}}
\newcommand{\nnt}{\ensuremath{\texttt{TriadicM-N}}}
\newcommand{\dya}{{\tt Clustering}}
\newcommand{\greedy}{{\tt Greedy}}

\newcommand{\synthetic}{{\sf Synthetic-1}}
\newcommand{\synthetictwo}{{\sf Synthetic-2}}
\newcommand{\bia}{{\sf BIA660}}
\newcommand{\dblpAug}{{\sf DBLP-Aug}}
\newcommand{\adult}{{\sf Adult}}
\newcommand{\census}{{\sf Census}}
\newcommand{\dblp}{{\sf DBLP}}
\newcommand{\ct}{{\sf CT}}
\newcommand{\asw}{{\sf ASW}}
\newcommand{\sss}{{\sf SS}}

\newcommand{\etal}{{\it et. al.}}

\newcommand{\spara}[1]{\smallskip\noindent{\bf{#1}}}
\newcommand{\mpara}[1]{\medskip\noindent{\bf{#1}}}
\newcommand{\bpara}[1]{\bigskip\noindent{\bf{#1}}}

\newcommand{\squishlist}{\begin{list}{$\bullet$}
  { \setlength{\itemsep}{0pt}
     \setlength{\parsep}{3pt}
     \setlength{\topsep}{3pt}
     \setlength{\partopsep}{0pt}
     \setlength{\leftmargin}{1.5em}
     \setlength{\labelwidth}{1em}
     \setlength{\labelsep}{0.5em} } }
\newcommand{\squishend}{
  \end{list}  }

\begin{frontmatter}

\title{A Team-Formation Algorithm \\for Faultline Minimization}

\author[bu]{Sanaz Bahargam}
\ead{bahargam@cs.bu.edu}
\author[bu]{Behzad Golshan}
\ead{behzad@cs.bu.edu}
\author[st]{Theodoros Lappas}
\ead{tlappas@stevens.edu}
\author[bu]{Evimaria Terzi}
\ead{evimaria@cs.bu.edu}

\address[bu]{Boston University}
\address[st]{Stevens Institute of Technology}

\begin{abstract}
In recent years, the proliferation of online resumes and the need to evaluate large populations of candidates 
for on-site and virtual teams have led to a growing interest in automated team-formation. Given a large pool of candidates,
the general problem requires the selection of a team of experts to complete a given task. Surprisingly, while ongoing research has studied numerous variations with different constraints, it has overlooked a factor with a well-documented
impact on team cohesion and performance: team faultlines. Addressing this gap is challenging, as the available measures for faultlines in existing 
teams cannot be efficiently applied to faultline optimization. In this work, we meet this challenge with a new measure
that can be efficiently used for both faultline measurement and minimization. We then use the measure to solve the problem of automatically partitioning a large population into low-faultline teams.
By introducing faultlines to the team-formation literature, our work creates exciting opportunities for algorithmic work on faultline
optimization, as well as on work that combines and studies the connection of faultlines with other influential team characteristics. 
\end{abstract}

\begin{keyword}
Teams\sep Team Faultlines \sep Team Formation
\end{keyword}

\end{frontmatter}


\section{Introduction}
The problem of organizing the individuals in a given population into teams emerges in 
multiple domains. 
In a business setting, 
the workforce of a firm is organized in groups, with each group dedicated to a different 
project~\citep{mohrman1995designing}. In an educational context,
it is common for the instructor to partition the students in her class into small teams, with 
team members collaborating to complete 
different types of assignments~\citep{agrawal14grouping,webb1982student,bahargam2017team,agrawal2014forming}. In a 
government setting, elected officials are organized in committees that 
design and implement policies for a wide spectrum of critical 
issues~\citep{fenno1973congressmen}. 

In recent years, the proliferation of online resumes and the need to evaluate large populations of candidates 
for on-site and virtual teams have led to a growing interest in automated team-formation~\citep{lappas09finding,anagnostopoulos12online,golshan14profit-maximizing,majumder12capacitated,kargar11teamexp,an13finding,dorn10composing,gajewar12multiskill,li10team,sozio10community,agrawal14grouping,bahargampersonalized}. Given a large pool of candidates,
the general problem requires the selection of a team of experts to complete a given task. 
The ongoing literature has studied numerous problem variations with different constraints and optimization criteria.  
Examples include the  coverage of all the skills required to achieve a set of goals~\citep{lappas09finding,li10team,gajewar12multiskill}, smooth communication among the members of the team~\citep{rangapuram2013towards,anagnostopoulos12online,lappas09finding,kargar2012efficient}, the minimization of the cost of recruiting promising candidates~\citep{golshan14profit-maximizing,kargar2012efficient},
scheduling constraints~\citep{durfee2014using}, the balancing of the workload assigned to each member~\citep{majumder12capacitated}, and the need for effective  leadership~\citep{kargar11teamexp}. 

Surprisingly, while ongoing research on team formation has studied numerous variations with different constraints, it has overlooked a factor with a well-documented impact on a team cohesion and performance: \emph{team faultlines}. 
The faultline concept was introduced in the seminal work by Lau and Murnighan~\citep{lau1998demographic}. 
Faultlines manifest as hypothetical dividing lines that split a group into relatively homogeneous subgroups based on multiple attributes~\citep{lau1998demographic,meyer2013team}. The consideration of multiple attributes is critical, as it distinguishes relevant work from the study on  single-attribute faultlines, referred to as a ``separation''~\citep{harrison2007s}. The team-formation framework that we describe in this work focused on the general multi-attribute paradigm.
%
%
Faultline-caused subgroups are in risk of colliding, leading to costly conflicts, poor communication, and disintegration~\citep{bezrukova2009workgroup,choi2010group,gratton07bridging,jehn2010faultline,li2005factional,molleman2005diversity,polzer2006extending,shaw2004development,
thatcher2003cracks}.

Bridging the faultline literature with automated team-formation is challenging, as the available measures for faultlines in existing 
teams cannot be efficiently used for faultline optimization. For instance, many faultline measures utilize clustering algorithms
to identify the large homogenous groups that create faultlines within a team~\citep{meyer2014faultlines,jehn2010faultline,meyer2013team,barkema2007does,lawrence2011identifying}. While such measures have emerged as the state-of-the-art, their clustering step requires a pre-existing team. Therefore, in a team-formation setting, a clustering-based measure 
would need to naively consider all (or an exponential number) of possible teams in order to find a faultline-minimizing solution. 
This brute-force approach is not applicable to even moderately-sized populations. Similarly, we cannot use any of the existing measures based on expensive (and often exponential) computations to identify the subgroups within a team~\citep{thatcher2003cracks,zanutto2011revisiting,bezrukova2009workgroup,trezzini2008probing,shaw2004development,van2011diversity}.

In this work, we describe the fundamental efficiency principles that a faultline measure needs to follow in order to be 
applicable to the automated formation of faultline-minimizing teams. We then introduce  \emph{Conflict Triangles} ({\ct}): a new measure that follows these principles. The {\ct} measure is consistent with the principles of faultline theory~\citep{lau1998demographic} and is founded on the extensive work on the balance of social structures~\citep{cartwright56structural,easley10networks,heider58psychology,morrissette67no}.
We then use this measure as the objective function for the problem of partitioning a given population into teams, such that the average faultline score per team is minimized. We refer to this as the {\partition} problem and formally define it in Section~\ref{sec:problem}. 
Our work 
thus makes the following contributions:
\begin{enumerate}
\item{We initiate research on the unexplored overlap between the decades of work on team faultlines and the rapidly emerging field
of automated team formation.}
\item{We describe the fundamental efficiency principles that a faultline measure has to satisfy to be applicable to faultline-aware team-formation.}
\item{We present a new measure that follows these principles can thus be used for both faultline measurement and minimization. Our evaluation demonstrates the measure's effectiveness in both tasks.}
\item{We formally define the {\partition} problem, analyze its complexity, and present an efficient algorithmic framework for its solution.}
\end{enumerate}

By introducing faultlines to automated team-formation, our work creates exciting opportunities for algorithmic work on faultline
optimization, as well as on work that combines and studies the connection of faultlines with other influential team characteristics. 
In Section~\ref{sec:conclusions}, we discuss the implications of our work for practitioners in both organizational and educational settings and discuss potential directions for future work.

\section{Background and Motivation}\label{sec:related}
To the best of our knowledge, our work is the first to incorporate faultlines in an algorithmic framework
for automated team-formation. However, our work is related to three types of research:
$(i)$ algorithmic frameworks for optimizing 
various factors that affect the performance of a team. 
$(ii)$ management, psychology and sociology studies
on faultlines and their effects on team outcomes, and $(iii)$  efforts measure faultlines in existing teams. Next, we discuss each of these categories in more detail.

\subsection{Algorithmic work on team formation}
\label{sec:auto}
Our previous work~\citep{lappas09finding} studied the problem of automated team-formation in the context of social networks.
Given a pool of experts and a set of skills that needed to be covered, the goal there is to select a team of experts that can collectively cover all the required skills, while ensuring efficient intra-team
communication.
Over the last years, this work has been extended to 
identify a single team or a collection of teams that optimize  different
factors that influence a team's performance. 
For example, a significant body of work has focused on incorporating different 
definitions of the communication cost among experts~\citep{anagnostopoulos12online, an13finding,dorn10composing,gajewar12multiskill,kargar11teamexp,li10team,sozio10community,galbrun2017finding}.  
Other work has also focused on optimizing the cost of recruiting promising candidates~\citep{golshan14profit-maximizing,an13finding},  minimizing the workload assigned to 
each individual team member~\citep{majumder12capacitated,anagnostopoulos10power}, satisfying scheduling constraints~\citep{durfee2014using}, identifying effective leaders~\citep{kargar11teamexp}, and optimizing the individual's benefit from team participation~\citep{agrawal14grouping,bahargampersonalized}.
Although all these efforts focus on optimizing various teams aspects, the work that we describe in this paper is the first
to address faultline optimization. As we describe in our work, minimizing faultline potential raises new algorithmic challenges that cannot be addressed by extant algorithmic solutions.

\subsection{Studies on the effects of team faultlines}
For decades, researchers from various disciplines have studied the creation, operation, and performance of teams in different settings. 
Faultline theory was introduced by Lau and Murnighan~\citep{lau1998demographic}. It has since been the focus of numerous follow-up works. A number of papers have studied 
how the existence of faultlines within a team can lead to conflict~\citep{li2005factional,choi2010group,thatcher2003cracks} and affect functionality~\citep{molleman2005diversity,polzer2006extending} and performance~\citep{bezrukova2009workgroup,thatcher2003cracks}. Motivated by the observation that the existence of faultlines does not guarantee the formation of colliding subgroups, researchers have also studied the factors that 
can lead to faultline activation~\citep{pearsall2008unlocking,jehn2010faultline}. Further, Gratton et al.~\citep{gratton07bridging} explored strategies that a leader or manager can follow to effectively handle or avoid the emergence of faultlines within a team.

\subsection{Operationalizing Faultline Strength}
\label{sec:operationalize}
Previous work has suggested various methods for evaluating faultlines in teams. 
Even though the original faultline paper by ~\citep{lau1998demographic} serves as the foundation of the long line of relevant literature and introduces principles that we also adopt in our work, it does not define a faultline measure. Instead, the authors 
lay out fundamental principles that a measure needs to follow in order to accurately evaluate the faultline strength 
in a given team. While these principles are appropriate for faultline measurement, they are not sufficient to ensure that a qualifying
measure will also have the computational efficiency required to serve as the objective function of a scalable  algorithm
that has to process large populations of candidates to create teams with minimal faultlines. Computational efficiency is critical
in this setting, as each of the hundreds or thousands of individuals in the given population can be represented by a point in a multidimensional space of attributes (e.g. demographics, resume information). Any team-formation algorithm would then have to efficiently navigate this space and quickly evaluate the faultline strength of many different combinations in order to identify faultline-minimizing teams. Therefore, in order to be efficiently applicable to faultline minimization, a faultline measure should follow the following two efficiency principles:
\begin{itemize}
\item{\textbf{Linear Computation:} The measure should be easy to compute for a given team in polynomial time.}
\item{\textbf{Constant Updates:} The measure should be easy to update in constant time if one person joins or leaves the team.}
\end{itemize}

In Section~\ref{sec:prelim}, we introduce  \emph{Conflict Triangles} ({\ct}): a new measure that provides these two characteristics. The {\ct} measure is consistent with the principles of faultline theory~\citep{lau1998demographic} and is founded on the extensive work on the balance of social structures~\citep{cartwright56structural,easley10networks,heider58psychology,morrissette67no}.
Next, we review the extensive literature on faultline measurement and discuss the shortcomings of extant measures in the context of the two efficiency principles that are necessary for automated team-formation.

\subsubsection{State of the Art in Faultline measurement}
A long line of literature has focused on identifying and measuring the strength of faultlines in existing teams. In recent years,
clustering-based algorithms have emerged as the state of the art for this purpose~\citep{meyer2014faultlines,jehn2010faultline,meyer2013team,barkema2007does,lawrence2011identifying}. This line of work is exemplified by the 3-step Average Silhouette Width ({\asw}) approach proposed by~\citep{meyer2013team}. 
Given a team of individuals, the first step includes applying an agglomerative-clustering algorithm for pre-clustering the team's members. Agglomerative clustering begins by assigning each member to its own cluster. The two most similar clusters are then iteratively joined until all points belong to the same cluster. 
The authors of the original paper experiment with the two
most popular merging criteria: Ward's algorithm and Average Linkage (AL). Thus, for a team with $n$ members, the joint set of results from the two alternatives yields a total of $2\times n$ possible configurations ($2$ for each possible number of clusters).

The second step focuses on computing the {\asw} of each possible configuration~\citep{rousseeuw1987silhouettes}. The silhouette $s(i)$ of an individual $i$ quantifies how well a team member $i$ fits into its cluster in comparison to all other clusters and is formally defined as:
$$
s(i)=\frac{b_i-\alpha_i}{max(\alpha_i,b_i)},
$$
where $a_i$ is the average distance of $i$ to all other point in its cluster and $b_i$ is the lowest average distance of $i$ to all points in any other cluster of which $i$ is not a member.  The silhouette ranges from $-1$ to $+1$, where a high value indicates that the object is well matched to its own cluster and poorly matched to others. The {\asw} is the average silhouette of all the team's members.

The third step employs a post-processing method to maximize the {\asw}  of each configuration, by temporarily moving individuals across subgroups and recomputing the {\asw} after each move.
The move that leads to the highest increase is made permanent. The process continues until no further improvement is possible. Finally,
the maximum {\asw} score over all configurations is reported as the strength of the team's faultline structure.

\spara{Using {\asw} in automated team-formation:} Previous work has repeatedly verified the advantage of the {\asw} measure over alternative approaches~\citep{meyer2014faultlines,meyer2013team}. However, the measure cannot be efficiently used as the objective function for team-formation algorithms, as it is designed to evaluate faultline strength in existing teams and assumes that the composition of a team is part of the input. 
As stated earlier in this section, an appropriate measure for faultline minimization should be easy to compute in linear
time and easy to update in constant time. However, Given a team of $n$ individuals,
the complexity of the agglomerative-clustering step alone is $O(n^2logn)$~\citep{rokach2005clustering}. There is then no guarantee on the number of reassignments that it will take for the {\asw} score to converge. In addition, the score cannot be updated in constant time.
Instead, the deletion or addition of a member would require the new team to be re-evaluated from scratch, in order to compute the optimal {\asw} score.

In theory, a practitioner could consider all possible teams, evaluate their respective faultline strengths, and choose the optimum. In practice, however, this brute-force approach is not scalable and can only be applied to populations of trivial size. It can certainly not be applied to populations of hundreds or thousands of individuals, which are common in the team-formation literature~\citep{lappas09finding,anagnostopoulos12online,anagnostopoulos10power}. The team-partitioning task that we address in this work is considerably
more computationally challenging than single-team formation. In order to use the {\asw} measure for this task, a practitioner would
have to consider all possible partitionings of a population into fixed-size, non-overlapping teams. This is a computationally
intractable process that would have to consider $O(N!)$ alternatives. 
Similar to the {\asw} measure, other clustering-based approaches are also excluded from automated team-formation due to computational
efficiency~\citep{barkema2007does,lawrence2011identifying}. 

\subsubsection{Other Faultline Measures}
Similar to clustering-based approaches, most existing faultline measures are not applicable to team-formation tasks due to computational efficiency.
For instance, the cost to compute the Index of Polarized Multi-Dimensional Diversity proposed by~\citep{trezzini2008probing} 
grows exponentially with the number of attributes. The SGA measure by~\citep{carton2013impact} depends on the exhaustive evaluation of every possible partition of a given group with two or more subgroups. Similarly, the FLS measure by~\citep{shaw2004development}
depends on the computation and averaging of all possible internal alignments and cross-product alignments of every feature with respect to the subgroups of every other feature. Given that each of these constructs has to be updated every time a person is added to or removed from a team, the FLS formula cannot be updated in constant time. 
The measure proposed by~\citep{van2011diversity} uses regression analysis
to measure the variance of each attribute that is explained by all other attributes. Despite its advantages in a measurement
setting, running multiple regressions for every candidate team is not a realistic option in a team-formation setting.

\citep{thatcher2003cracks} propose a formula for computing the portion of the total variance explained by a given
segmentation of a team into subgroups. Their final faultline measure $Fau_g$ is then defined as the score of the 
segmentation that maximizes the formula.  However, the measure can only be applied for segmentations of two subgroups
due to (i) the exhaustive nature of the search for the best split that makes the cost prohibitive in a team-formation setting, and (ii) the fact that, if we allow the number of subgroups to vary
arbitrarily, the solution that maximizes the formula is to trivially assign each individual to its own subgroup. Hence, an algorithm
that uses this measure to create low-faultline teams would never choose to create highly diverse teams, despite the fact that high diversity is associated with low faultlines~\citep{lau1998demographic}.
These limitations are inherited by follow up efforts that extend this measure~\citep{zanutto2011revisiting,bezrukova2009workgroup}.
The measure by~\citep{li2005factional} assumes a specific attribute of interest and is not suitable for evaluating team faultlines across attributes. This is also the reason that the measure has been excluded by comparative studies of faultline measures~\citep{meyer2013team}.

Another relevant construct is the Subgroup Strength measure proposed by~\citep{gibson2003healthy}. While this measure
is not designed for faultline measurement, it is relevant due to its focus on subgroups.
Its creators posit that strong subgroups exist if there is high
variability in the extent to which attributes overlap in the dyads within a team. Their measure is thus based on computing
the pairwise similarities between the team's members across all attributes. The team's subgroup strength
is then computed as the standard deviation over all possible member pairs. Even though this measure is not specifically designed for faultline measurement, it is easy to compute and to update, as required by the team-formation paradigm. Hence, we include this measure in our experimental evaluation in Section~\ref{sec:experiments}.

\section{Operationalizing a Team's Faultline Potential}
\label{sec:prelim}

We consider a pool {\workers} of $n$ individual workers.
Each worker $i\in \workers$ is associated with an $m$-dimensional
feature vector $\worker_i$, such that $\worker_i(f)$ returns the value
of feature $f$ for worker $i$. For each feature $f$, we create a complete \emph{signed} graph $G_f$
that includes one node for each worker in $\workers$. The sign of the edge between two nodes (workers) $(i,i')$ is positive if 
they have the same value for feature $f$ (i.e. $\worker_i(f)=\worker_{i'}(f)$ ) and negative otherwise. 
Consider the following example:

\begin{example}
\label{ex:featuregraph}
We are given a pool of $3$ workers, where each worker is described by $3$
features: \emph{country of origin, gender}, and \emph{undergraduate major}. Our data thus consists
of the following feature vectors:
\begin{eqnarray*}
\worker_1 &=& [\mathrm{\tt{India}}, \mathrm{\tt{Male}},
\mathrm{\tt{Computer\ Science}}] \\
\worker_2 &=& [\mathrm{\tt{India}}, \mathrm{\tt{Male}},
\mathrm{\tt{Business}}] \\
\worker_3 &=& [\mathrm{\tt{China}}, \mathrm{\tt{Male}},
\mathrm{\tt{Chemistry}}]
\end{eqnarray*}

Fig~\ref{fig:theseauthors} shows the graphs for the three features.

\begin{figure}
\begin{center}
\begin{tabular}{c}
\subfloat[]{
\includegraphics[scale=.2]{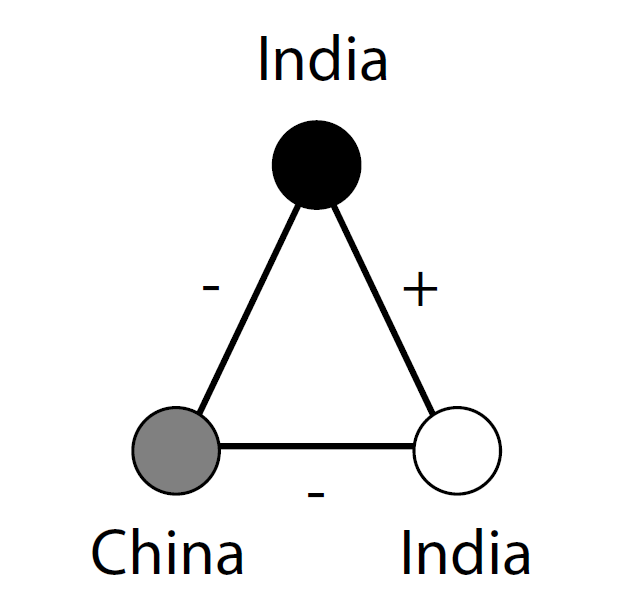}
\label{fig:theseauthors-bad}
} 
\subfloat[]{
\includegraphics[scale=.2]{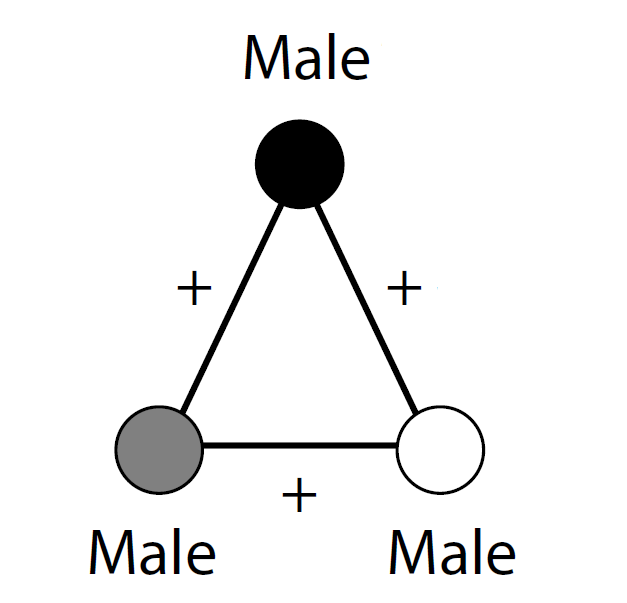}
\label{fig:theseauthors-pos}
}
\subfloat[]{
\includegraphics[scale=.2]{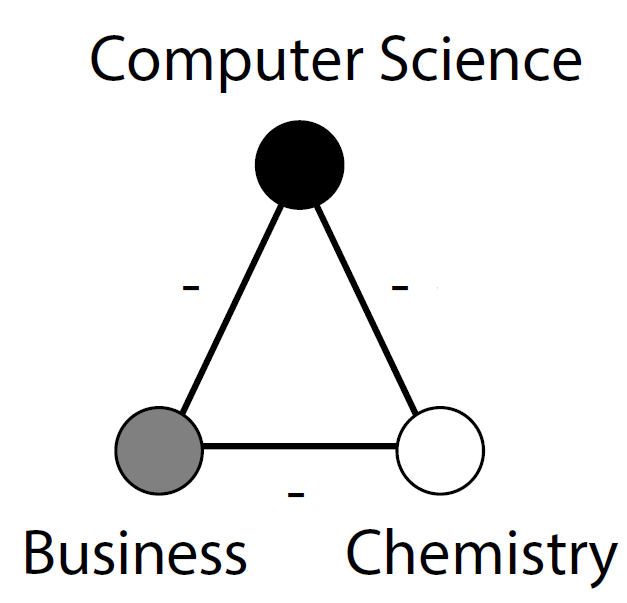}
\label{fig:theseauthors-neg}
}
\end{tabular}
\end{center}
\caption{\label{fig:theseauthors} Triangles associated with the country of origin, gender, and educational background.} 
\end{figure}
\end{example}

A long line of relevant literature has established the use of triangles to model social structures~\citep{cartwright56structural,easley10networks,heider58psychology,morrissette67no}. In our own setting, the triangle represents the fundamental building block of our faultline measure, as any structure that includes more members (e.g. a rectangle) can be trivially modeled via (or broken down to) triangles.
The figure reveals the existence of 3 possible types of triangles among the members of the team, according to the signs on their edges:
$(+,+,+)$, $(-,-,-)$, and $(-,-,+)$. By definition, $(+,+,-)$ triangles cannot exist as they would imply that 
2 individuals have the same value as the third one but not the same as each other.  
We observe that faultlines can only appear in the presence of $(+,-,-)$ triangles that consist of one positive
and two negative edges, such as the one for the \emph{country of origin} feature shown in Fig~\ref{fig:theseauthors}(a). 
Given that faultlines can only emerge in the presence of $(+,-,-)$ triangles, we refer to these as \emph{Conflict Triangles}. 

A conflict triangle captures the intuition that two people from the same country are more likely to interact with each other than to the third person, thus enabling the creation of a potential faultline.
On the other hand, A faultline could never occur for the \emph{gender} feature (Fig~\ref{fig:theseauthors}(b)), as all three authors have the same value (\emph{Male}). Similarly, since all three authors have a
different value for the \emph{undergraduate major} feature (Fig~\ref{fig:theseauthors}(c)), there is no faultline potential. This is consistent with faultline theory, which states that 
faultlines cannot emerge in the presence of perfect homogeneity or perfect diversity~\citep{lau1998demographic,gratton07bridging}. 

\begin{figure}[htb]
\begin{center}
\begin{tabular}{c}
\subfloat[High faultline potential]{
\includegraphics[scale=.15]{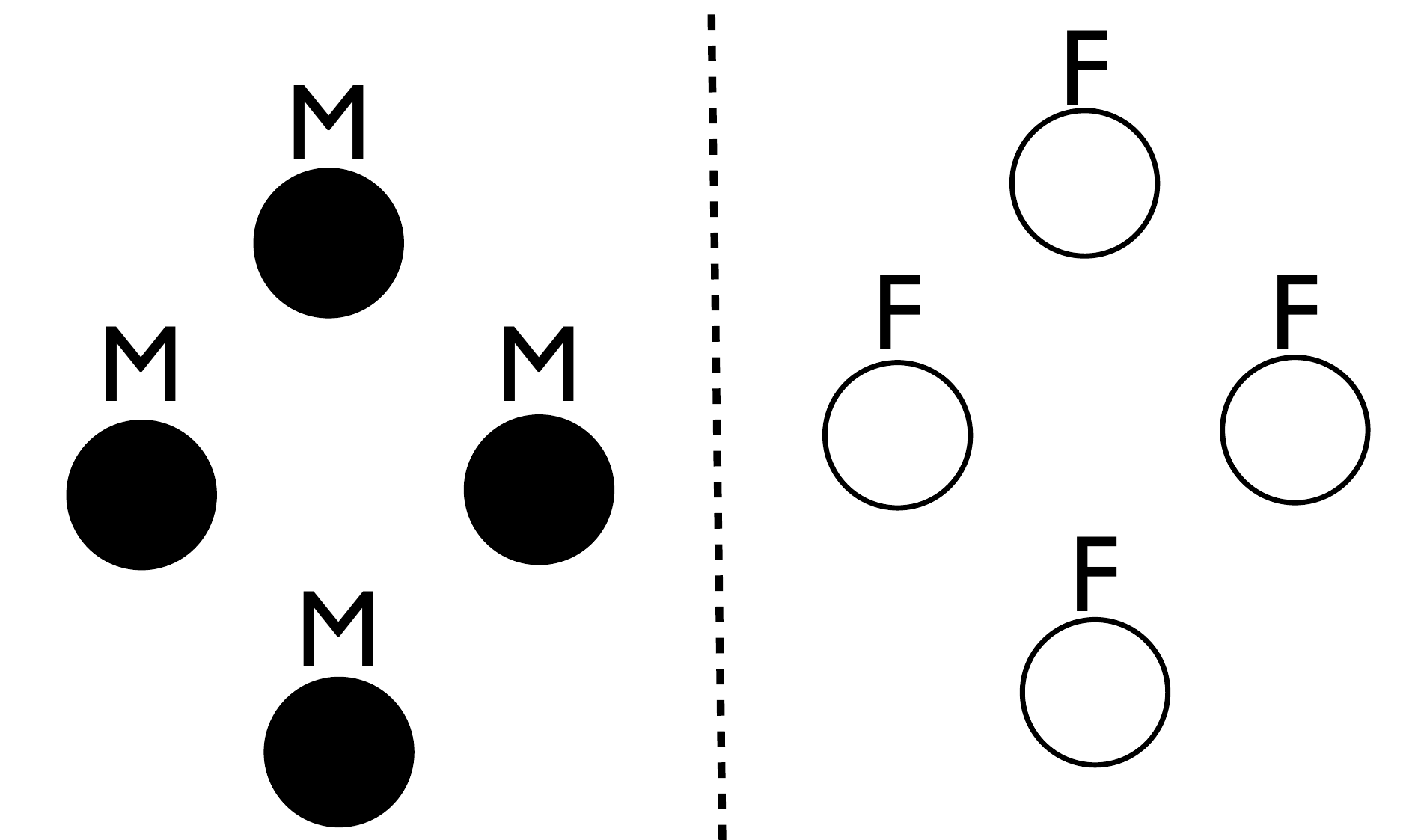}
\label{fig:groups1}
} \hspace{5ex}
\subfloat[No faultline potential]{
\includegraphics[scale=.15]{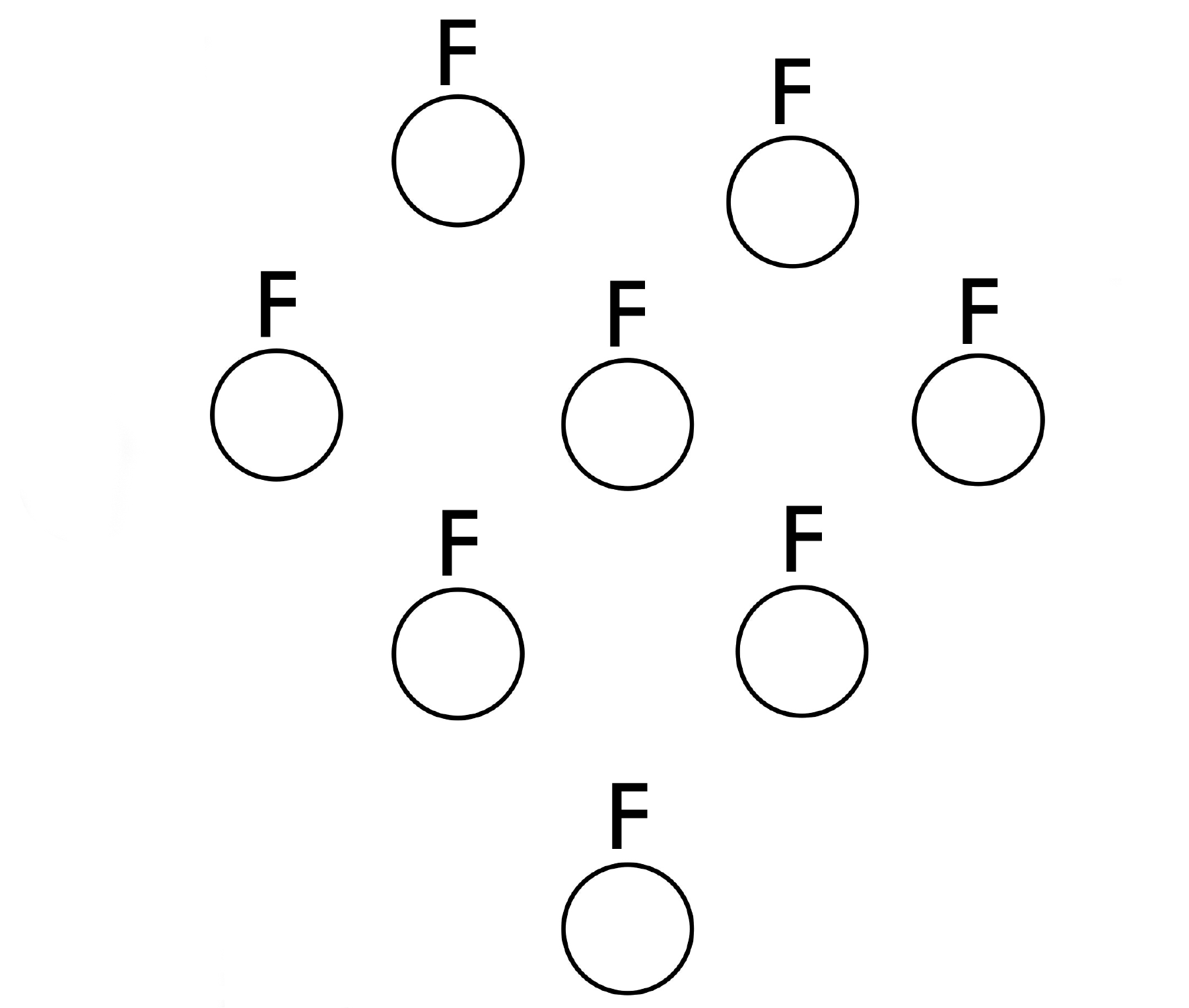}
\label{fig:groups2}
}\hspace{5ex}
\subfloat[No faultline potential]{
\includegraphics[scale=.15]{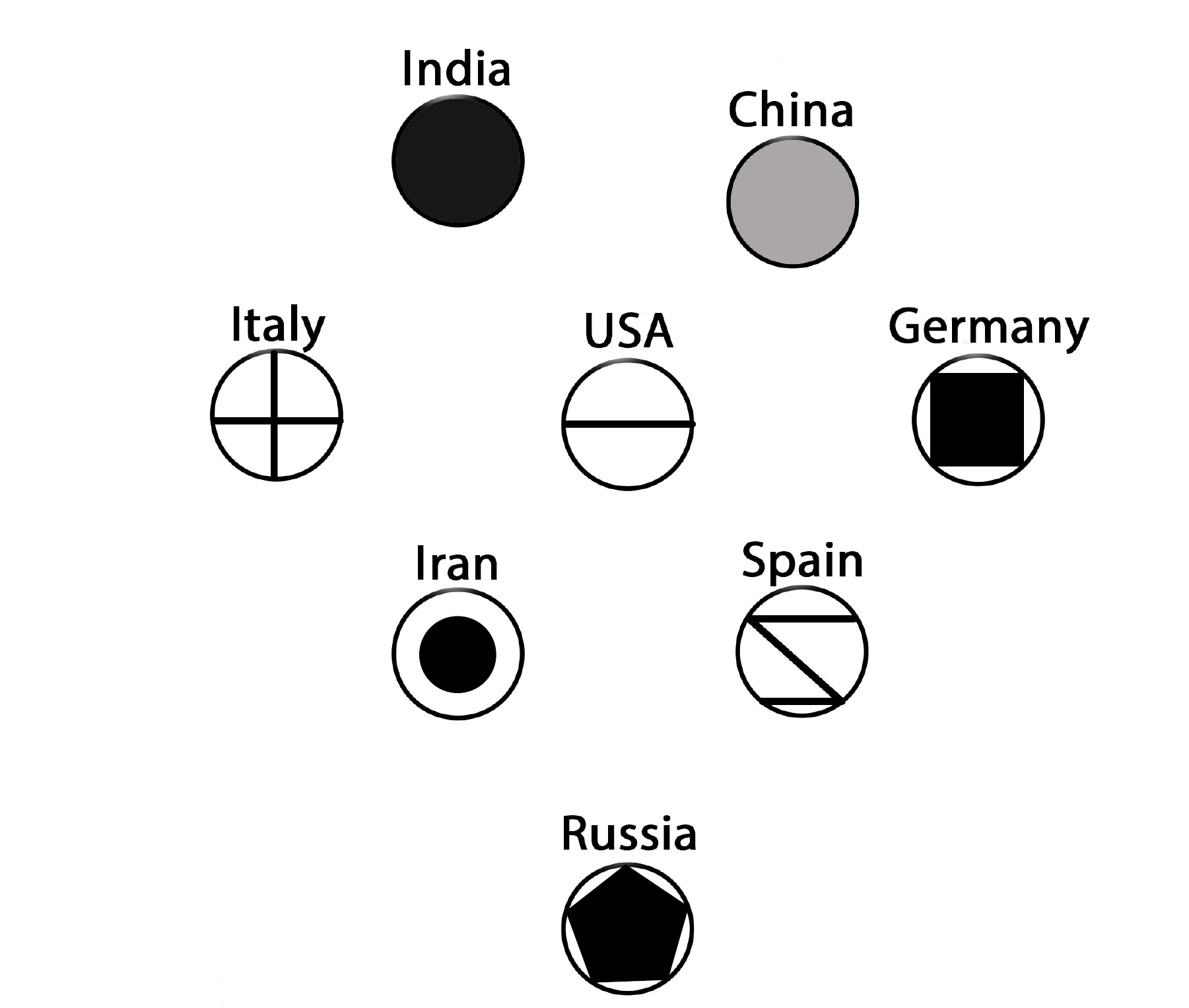}
\label{fig:groups3}
}
\end{tabular}
\end{center}
\caption{\label{fig:groups} Examples of teams with high and low faultline potential.}
\end{figure}

The ability of triadic relationships to capture the perfect homogeneity/diversity principles that are mandated by faultline theory maintaints its usefulenes in a team-formation setting. Consider the example in Figure~\ref{fig:groups1}. The team in the figure represents the worst-case scenario in terms of faultline potential for the gender feature: a 50-50 split between two large homogeneous groups of males (M) and females (F). Figure~\ref{fig:groups2} shows an example of a team with no faultline potential for gender, as it consists exclusively of female members. Even though increased homogeneity is indeed one of the ways to reduce faultline potential, it is wrong to equate diversity with the emergence of faultlines. We demonstrate this in Figure~\ref{fig:groups3}. All the members of the teams in this figure have different
values with respect to the feature \emph{country of origin}. We observe that, as in cases of perfect homogeneity, faultlines cannot exist in the presence of perfect diversity.  This observation reveals that the task of measuring a team's faultlines goes beyond simply measuring its diversity with respect to different features. Similarly, a team formation algorithm has to carefully balance the two states of homogeneity and diversity within a team in order to achieve a low potential for faultlines.

\subsection{Feature Alignment:}
The next essential step toward the design of a triangle-based faultline measure is the consideration of \emph{the alignment} of conflict triangles across multiple features~\citep{meyer2013team}.
Consider three individuals $(i, j, k)$ defined within a space of features $\mathcal{F}_{\mathcal{T}}$. Given a feature $f\in\mathcal{F}_{\mathcal{T}}$, 
let $\tau=<(i,j),k>$ be a conflict triangle such that $\worker_i(f)=\worker_j(f)$ \emph{and} $\worker_i(f),\worker_j(f)\neq \worker_k(f)$.
Let $iscon(\tau,f)$ be a function that returns 1 if $\tau$ is a conflict triangle for $f$ and 0 otherwise.

If the same conflict triangle emerges for a second feature $f'$, we say that $\tau$ is \emph{aligned} across the two features $f$ and $f'$ (i.e. $iscon(\tau,f)=iscon(\tau,f')=1$). 
Let $p(\tau,T)$ return the percentage of all available features of team $T$ for which $\tau$ is aligned (i.e. for which $\tau$ appears as a conflict triangle). Formally:
$$
p(\tau,T)=\frac{|\{f\in\mathcal{F}_{\mathcal{T}}:\ iscon(\tau,f)=1\}|}{|\mathcal{F}_{\mathcal{T}}|}
$$

We say that a triangle $\tau$ from team $T$ is \emph{fully aligned} if it is aligned across all team features (i.e. $p(\tau,T)=1$).
Then, we define the faultline potential of a given team $T$ as follows:

\begin{equation}
\label{eq:act}
CT(T)=\sum_{\tau\in\mathcal{D}_T}p(\tau,T) 
\end{equation}

where $\mathcal{D}_T$ is the set of all distinct conflict triangles $<(i,j),k>$ that appear across any of the features in $T$. Our measure has a probabilistic interpretation, as it encodes the expected number of successes (conflict triangles) that we would get after $|\mathcal{D}_T|$ Bernoulli trials, where each trial
corresponds to a different $\tau\in\mathcal{D}_T$ and has a success probability equal to $p(\tau,T)$. The trial for conflict triangle $\tau$ involves sampling )(uniformly at random) a feature $f$ from $\mathcal{F}_{\mathcal{T}}$ and is successful if $\tau$ is a conflict triangle for $f$. Hence, a perfectly aligned triangle would succeed for any sampled feature and would increment the team's score by 1. Similarly, the trial for a triangle $\tau$ that is aligned over half of the team's features would have a $50\%$ of success and would increment the team's score by $0.5$.

The penalty that Eq.~(\ref{eq:act}) assigns to each conflict triangle in the team is directly proportional to the triangle's alignment across the team's features.

Under this definition, the minimum faultline potential is assigned to perfectly homogeneous or perfectly diverse teams, as they both
include zero conflict triangles. On the other hand, in accordance with faultline theory~\citep{lau1998demographic}, the maximum faultline potential is assigned to teams that can be split into two perfectly homogeneous subgroups of equal size.

\spara{Learning the appropriate penalization scheme from real data:}
The definition given in Equation~\ref{eq:act} intuitively applies, for each conflict triangle, a penalty that is directly proportional to the triangle's alignment across the team's features. We thus expect it to be a reasonable modeling choice for many domains. However, in practice, this penalization scheme may not be appropriate for
a specific domain or application. Therefore, we extend our framework via by describing a methodology that allows practitioners to \emph{learn} the appropriate penalization on function for their domain, based on information from existing teams in the same domain. We present the details of our technique for learning the penalization parameters in Section~\ref{sec:alignments}.

\subsection{Efficiently computing a team's faultline potential}
\label{sec:eff}
The computation requires us to count the total number of conflict triangles
across all features. Thus, for $T\subseteq W$, $\calF(T)$ can be computed in
polynomial time.  For this, one has to consider all triangles appearing in
the feature graphs and count how many of those are conflict triangles.
The running time of the naive computation is $O(m |T|^3)$ where $|T|$ is
the size of the team and $m$ is the number of
features. Next, we present a method for significantly speeding up this computation.

Given a set of workers $T\subseteq W$, 
 and a feature $f$ that takes values $v_1,\ldots,v_L$, we summarize the values of
$f$ observed among the workers in $T$ via the \emph{aggregate feature vector}
$\aggf(T,f)$ such that $\aggf(T,f)[v_j]$ gives the number of workers in $T$
that have a value equal to $v_j$. 
We observe that these aggregate vectors can be
computed in $O(m|T|)$ time by simply counting all feature values of all workers.
Once the aggregate feature values have been computed, the faultline potential for each
feature $f$ that takes values $v_1,\ldots,v_L$ can be written as follows:

\begin{equation}\label{eq:fast_faultline}
\calF(T,f) = \sum_{j=1}^N {\binom{\aggf(T,f)[v_j] }{2}} \left(|T| - \aggf(T,f)[v_j]\right)
\end{equation}

We observe that, for any feature $f$ with $L$ different possible values, the faultline potential with 
respect to $f$ can be computed in $O(L)$ time using the above equation. Thus, the overall
faultline potential $\calF(T)$ can be computed in $O(mL)$. Given that both the number of
features $m$ and the number of possible values for each feature $L$ are usually small 
constants, this computational cost is negligible compared to the time required to
create the aggregate feature values. The use of the aggregate feature vectors also allows us to update
the score in constant time, as required by the second efficiency principle of faultline-aware team-formation. Specifically,
if an individual $i$ joins or leaves the team, we only need to update (in $O(m)$) the number of conflict triangles that are due to the aggregate counts that change due to the addition or removal of $i$.

\section{The {\partition} Problem}\label{sec:problem}
In this section, we formally define the {\partition} problem, i.e., the problem of
partitioning a set of workers $W$ into
$\ell$ teams of equal size such that the total faultline potential score across
teams is minimized. We show that this problem is not only NP-hard to solve,
but also NP-hard to approximate within any bounded approximation factor, unless $\text{P}=\text{NP}$.
Then, in Section~\ref{sec:alg}, we present an efficient heuristic algorithm for its solution.

First, we extend the notion of faultline potential to a collection of teams.
For any partitioning $\bT = \{T_1, T_2, \cdots, T_\ell\}$ of workers into
$\ell$ teams, we use $\calF(\bT)$ to denote the total faultline potential of
all teams in $\bT$. Formally: 
\begin{equation}
\label{eq:part}
\calF(\bT) = \sum_{i=1}^\ell \calF(T_i).
\end{equation}

We can thus define the {\partition} problem as follows:

\begin{problem}[{\partition}]
Given a pool of workers $\workers$ (with $|\workers| = \ell \times k$), 
find a partitioning $\bT = \{T_1, T_2, \cdots, T_\ell\}$ of the workers $W$
into $\ell$ teams of size $k$ such that $\calF(\bT)$ is minimized.
\end{problem}

Next, we proceed to analyze the hardness of the {\partition} problem. Our results apply for the more general problem of partitioning
a population into teams with specific but possibly different sizes.

\begin{theorem}
\label{theo:partition}
The {\partition} problem is NP-hard to solve.
\end{theorem}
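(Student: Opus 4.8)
The plan is to establish NP-hardness by reduction from a well-known NP-complete problem. The natural candidate is \textsc{3-Partition} (or a graph-coloring/balanced-partitioning variant), but given the combinatorial structure of the faultline potential $\calF$, a cleaner route is to reduce from the problem of partitioning a set into balanced groups that are each \emph{perfectly diverse} with respect to a single feature --- this connects directly to the observation, already proven via Equation~(\ref{eq:fast_faultline}), that $\calF(T,f)=0$ exactly when no value $v_j$ is repeated inside $T$ (equivalently $\aggf(T,f)[v_j]\le 1$ for all $j$). In that single-feature regime, a partitioning achieves total cost $0$ if and only if every team is a ``rainbow'' set, which is precisely a proper equitable coloring / system-of-distinct-representatives condition. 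So the first step is to pin down a source problem whose yes-instances correspond to zero-cost partitionings.

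Concretely, I would reduce from \textsc{Graph Coloring} or, more directly, from the NP-complete problem of deciding whether a multiset can be partitioned into $\ell$ groups of size $k$ each containing no repeated element --- this is equivalent to asking whether a bipartite-style ``color class'' assignment exists and is known to be NP-hard (it captures, e.g., edge-coloring / Latin-square completion type constraints, or can be obtained from \textsc{3-Dimensional Matching}). Given an instance of the source problem, I construct a pool $\workers$ with a single feature $f$ (so $m=1$), where worker $i$'s value $\worker_i(f)$ encodes the element it represents; set $\ell$ and $k$ to match. The key lemma is then: $\calF(\bT)=0$ if and only if each $T_i$ is perfectly diverse with respect to $f$, which follows immediately from Equation~(\ref{eq:fast_faultline}) since each summand $\binom{\aggf(T,f)[v_j]}{2}(|T|-\aggf(T,f)[v_j])$ is non-negative and vanishes exactly when $\aggf(T,f)[v_j]\in\{0,1\}$ or $\aggf(T,f)[v_j]=|T|$ --- and the latter (a monochromatic team) is ruled out when each value appears at most $k-1 < |T|$ times, which I arrange in the construction. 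Hence the source instance is a yes-instance iff the optimal \partition\ cost is $0$, which is decidable in polynomial time given an oracle for \partition.

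The steps, in order: (1) choose the source NP-complete problem and state it precisely, ideally one whose yes-certificate is exactly ``an equitable partition into rainbow/diverse classes''; (2) describe the polynomial-time construction of the \partition\ instance (single feature, encode elements as feature values, choose team count and size, possibly pad with dummy distinct-valued workers to enforce divisibility $|\workers|=\ell\times k$ and to cap multiplicities below $k$); (3) prove the forward direction --- a valid coloring yields a partitioning with $\calF(\bT)=0$; (4) prove the reverse direction --- using non-negativity of each term in Equation~(\ref{eq:fast_faultline}), $\calF(\bT)=0$ forces every team to be perfectly diverse, hence gives a valid solution to the source instance; (5) conclude that solving (even the decision version of) \partition\ solves the source problem, so \partition\ is NP-hard, and note the remark in the excerpt that the argument extends verbatim to teams of prescribed but unequal sizes.

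The main obstacle I anticipate is the divisibility/size bookkeeping in the construction: I need $|\workers| = \ell \times k$ with every team of size exactly $k$, while simultaneously ensuring (a) no feature value has multiplicity $\ge k$ (so that a zero-cost team cannot be achieved trivially by making it monochromatic, which would also give cost $0$ but not correspond to a valid source solution) and (b) the padding workers, each carrying a fresh unique value, do not accidentally make an otherwise-infeasible source instance feasible. This is handled by a careful choice of how many dummies to add and by making dummy values globally distinct, but it is the part of the proof that requires the most attention to detail; everything else is a direct consequence of the closed-form expression for $\calF(T,f)$ established earlier in Section~\ref{sec:eff}.
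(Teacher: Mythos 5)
There is a genuine gap, and it lies exactly where your reduction gets its leverage: the source problem you propose is not NP-hard. With a single categorical feature, ``agreement'' (having the same value) is an equivalence relation, so the positive-edge structure of the population is a disjoint union of cliques (one per value class). Deciding whether such a multiset of $\ell k$ items, with value multiplicities $m_1,\dots,m_c$, can be split into $\ell$ rainbow groups of size $k$ is a transportation/bipartite $b$-matching feasibility question (colors on one side, teams on the other, unit capacities between them, team demand $k$); its constraint matrix is totally unimodular, a fractional solution assigning color $i$ to each team with weight $m_i/\ell$ exists whenever $m_i\le \ell$, and hence the problem is solvable in polynomial time --- it is not equivalent to Latin-square completion or obtainable from \textsc{3-Dimensional Matching} in the way you suggest. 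Consequently, even granting your (correct) key lemma that $\calF(\bT)=0$ forces every team to be perfectly diverse once multiplicities are capped below $k$ (via the nonnegativity of each summand in Equation~(\ref{eq:fast_faultline})), a zero-cost instance of {\partition} with $m=1$ can be recognized in polynomial time, so no NP-hardness can be extracted from this construction. The same obstruction blocks the ``reduce from Graph Coloring'' variant: a single feature cannot encode an arbitrary adjacency relation, precisely because same-value is transitive.

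The paper's proof circumvents this by using \emph{many} features, one per edge of the complement graph $H'$ of the \textsc{$k$-Clique Partitioning} instance: each feature graph has exactly one positive edge (the corresponding pair of nodes) and all other edges negative, so the union of agreeing pairs across features realizes an arbitrary graph. Each $H'$-edge placed inside a team of size $k$ then contributes $(k-2)$ conflict triangles, so the optimum is zero if and only if every team is an independent set in $H'$, i.e., a clique in $H$ --- which is NP-complete to decide, and which also yields the inapproximability corollary. If you want to salvage your single-feature intuition, you must replace the source problem by one whose agreement structure is genuinely non-transitive, which in effect forces you back to a multi-feature gadget of the kind the paper uses.
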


Theorem~\ref{theo:partition} implies that the {\partition} problem cannot be optimally solved in polynomial time unless $NP=P$. Next, we provide a formal proof of this theorem.

\begin{proof}
We present a polynomial-time reduction from the NP-Complete {\sc k-Clique\ Partitioning} problem
to our {\partition} problem~\citep{gary1979computers,rosgen2007complexity}. The {\sc k-Clique\ Partitioning}
is a decision problem which asks the following question: Given a graph
$H=(V,X)$, is it possible to partition the nodes of the graph into
disjoint cliques of size $k$?

Given a graph $H=(V,X)$ (with $V$ nodes and $X$ edges), we first create the
complement of $H$ denoted by $H'=(V, X')$. Clearly, any clique of size $k$
in the original graph $H$ corresponds to a set of $k$ nodes with
no edges among them in $H'$.

For our reduction, every node $i\in V$ will correspond to a worker for our
problem. Also, we will interpret each edge in $H'$ as an agreement (``+'') and each
missing edge as a disagreement (``-'').
Then, for every edge $(i,i')$ in $H'$, we create a feature $f_{(i,i')}$ and then
construct the corresponding feature graph $G_{f_{(i,i')}}$ that contains one
positive edge connecting nodes $i$ and $i'$, while all other edges, connecting all
pairs of nodes, are negative. Fig~\ref{fig:reduction} shows how an example
graph $H$ with three edges is transformed into three feature graphs.

\begin{figure}[ht]
\begin{center}
\includegraphics[scale=.2]{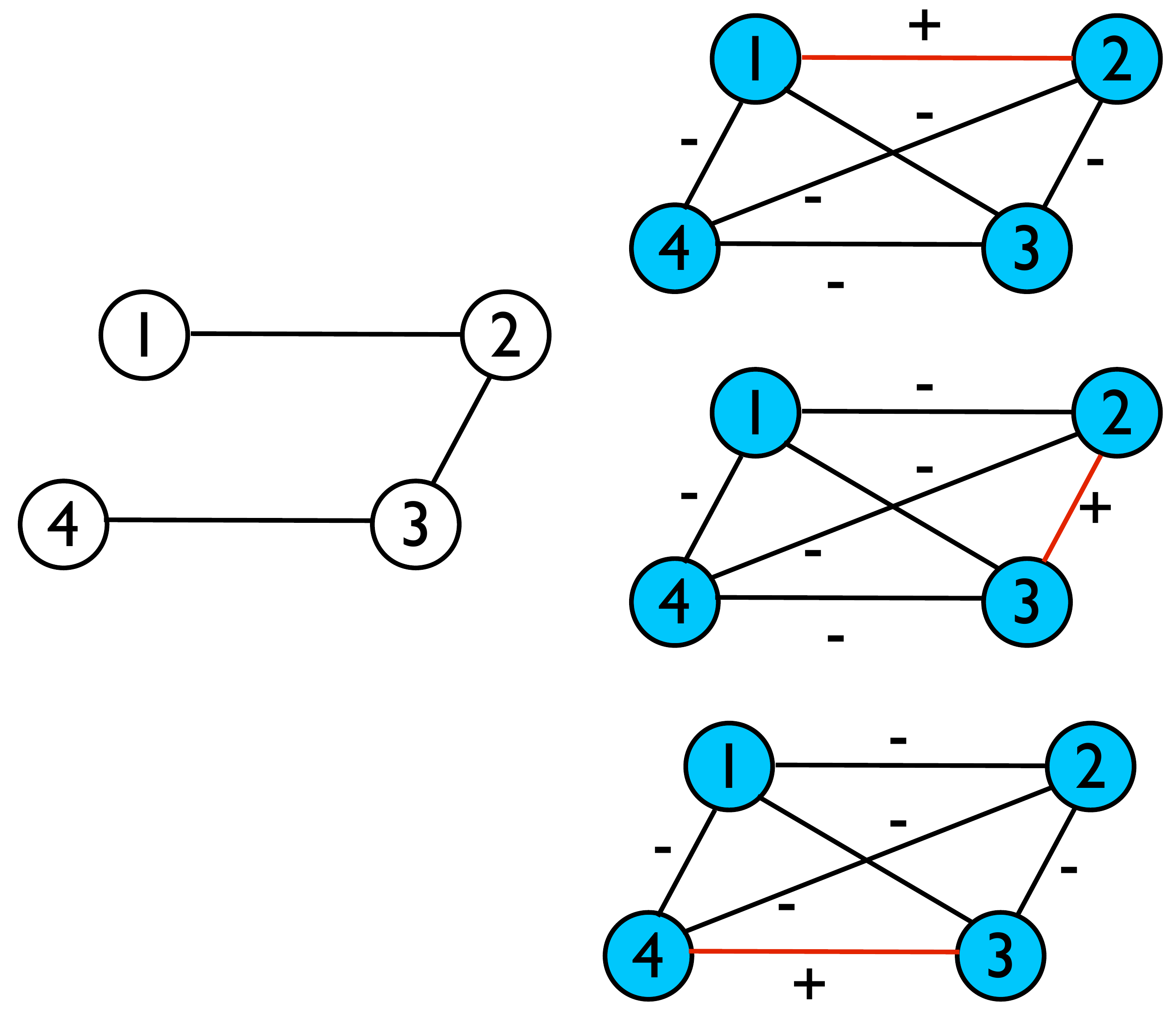}
\end{center}
\caption{\label{fig:reduction}Graph $\widehat{H}$ (in gray) and its feature graphs for the 
corresponding {\partition} problem}
\end{figure}

Now consider the optimal solution to this instance of the {\partition}
problem. Since the size of each team is fixed ($k$), it is easy to see
that each edge of $H'$ that falls within one team creates $(k - 2)$ conflict
triangles. This implies that the optimal solution is the one that minimizes
the total number of edges that fall within the partitions. More specifically,
the optimal solution has a faultline potential equal to zero if and only
if there exists a partitioning of the nodes in $H'$ with no edge inside 
the partitions which further corresponds to a partitioning of the nodes
in $H$ into cliques.
\end{proof}

\begin{corollary}
\label{theo:partitionapprox}
The {\partition} problem is NP-hard to approximate within any factor.
\end{corollary}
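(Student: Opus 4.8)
The plan is to observe that the reduction constructed in the proof of Theorem~\ref{theo:partition} is in fact \emph{gap-introducing}: it maps \textsc{yes}-instances of the {\sc k-Clique Partitioning} problem to {\partition} instances of optimal cost exactly $0$, and \textsc{no}-instances to instances of optimal cost strictly bounded away from $0$. Indeed, that proof establishes that an optimal partitioning $\bT^\ast$ has $\calF(\bT^\ast)=0$ iff $H$ can be partitioned into $k$-cliques, while every edge of $H'$ that falls inside a team of size $k$ contributes exactly $k-2$ conflict triangles; hence on a \textsc{no}-instance \emph{every} feasible partitioning $\bT$ satisfies $\calF(\bT)\ge k-2>0$ (in the interesting regime $k\ge 3$; the cases $k\le 2$ are polynomial-time solvable anyway). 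A multiplicative approximation cannot cross such a $0$-versus-positive gap.

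Formally, I would argue by contradiction: suppose that for some finite $\alpha\ge 1$ there is a polynomial-time algorithm \calA\ that always returns a partitioning $\bT$ with $\calF(\bT)\le\alpha\cdot\calF(\bT^\ast)$, and run \calA\ on the instance produced by the reduction. The objective $\calF$ is non-negative — by Equation~(\ref{eq:fast_faultline}) each per-feature term is a sum of products of a binomial coefficient and a non-negative count — so on a \textsc{yes}-instance, where $\calF(\bT^\ast)=0$, the algorithm must output $\bT$ with $0\le\calF(\bT)\le\alpha\cdot 0=0$, i.e.\ $\calF(\bT)=0$; by the reduction this $\bT$ yields a valid partition of $H$ into $k$-cliques. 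On a \textsc{no}-instance, \calA\ necessarily returns $\bT$ with $\calF(\bT)\ge k-2>0$. Hence a single run of \calA\ followed by evaluating $\calF$ (which is polynomial by Section~\ref{sec:eff}) decides {\sc k-Clique Partitioning} in polynomial time, contradicting its NP-completeness unless $\text{P}=\text{NP}$. Since $\alpha$ was an arbitrary finite factor, {\partition} admits no polynomial-time approximation with any bounded ratio.

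I expect essentially no genuine obstacle here: the heavy lifting — exhibiting a reduction in which the optimum is exactly zero on \textsc{yes}-instances — is already done in Theorem~\ref{theo:partition}, and this corollary is the textbook consequence of having such an exact-zero gap for a non-negative minimization objective. The only points requiring a line of care are (i) recording that $\calF\ge 0$ so that ``$\le\alpha\cdot 0$'' forces the value to be exactly $0$, and (ii) noting that the argument is unchanged for the stated generalization to teams of prescribed but possibly different sizes, since an intra-team edge in a team of size $k_t$ still contributes $k_t-2\ge 1$ conflict triangles whenever every team has size at least $3$.
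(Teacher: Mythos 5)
Your argument is correct and is essentially the paper's own: both exploit the zero-versus-positive gap created by the reduction in Theorem~\ref{theo:partition}, noting that an $\alpha$-approximation would be forced to output a zero-cost partitioning on \textsc{yes}-instances and would therefore decide {\sc k-Clique Partitioning} in polynomial time. Your version merely spells out the non-negativity of $\calF$ and the $k-2$ lower bound on \textsc{no}-instances a bit more explicitly than the paper does.
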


\begin{proof}
We will prove the hardness of approximation of {\partition} by contradiction.
Assume that there exists an $\alpha$-approximation algorithm for the
{\partition} problem. Then if $\bT^{\ast} = \{T_1^{\ast}, T_2^{\ast}, \cdots,
T_\ell^{\ast}\}$ is the partitioning with lowest faultline potential and
$\bT^{\calA} = \{T_1^{\calA}, T_2^{\calA}, \cdots, T_\ell^{\calA}\}$ is
the solution output by this approximation algorithm, it will hold that
$\calF(\bT^{\calA})\leq \alpha \calF(\bT^{\ast})$.
If such an approximation algorithm exists, then this algorithm
can be used to decide the instances of the
{\sc k-Clique\ Partitioning} problem, for which the optimal solution
has a faultline potential equal to 0. However, this contradicts
the proof of Theorem~\ref{theo:partition}, which indicates that these
problems are also NP-hard. Thus, such an approximation algorithm does not exist.
\end{proof}

\subsection{The {\matching} algorithm}
\label{sec:alg}
In this section, we present an algorithm for the {\partition} problem. 
We refer to the algorithm as {\matching} and provide the pseudocode 
in Algorithm~\ref{algo:iterative}. The Python implementation of the algorithm is available online~\footnote{\url{https://github.com/sanazb/Faultline}}.

The algorithm starts
with a random partitioning of the input population into $\ell$ equal-size groups and then reassigns individuals to teams in an iterative fashion until the faultline potential
of the obtained partitions does not improve across iterations.

\begin{algorithm}
\begin{algorithmic}[1]
\Statex {\bf Input:} Set of workers $W$ with $m$ features and the number of
desired partitions $\ell$.
\Statex {\bf Output:}  Partitioning $\bT = \{T_1,T_2,\ldots, T_\ell\}$ 
\State Randomly partition $W$ into $\bT = \{T_1,\ldots,T_\ell\}$
\While { $\calF(\bT)$\text{ has not converged} }
\State $\bc = $  {\assigncosts} $(W,\bT)$
\State $\bT$ = {\reassignteams}$\left(\bT,\bc\right)$
\EndWhile
\State return $\bT$ 
\end{algorithmic}
\caption{\label{algo:iterative}The {\matching} algorithm.}
\end{algorithm}

In each iteration, the algorithm starts with a partitioning of the set $W$ into $\ell$
groups and forms a new assignment with (ideally) a lower faultline potential score.
This is done by executing two functions: {\assigncosts} and {\reassignteams}. The
{\assigncosts} function returns a cost associated with the assignment
of every individual to every team; i.e., $\bc(i,T_j)$ is the cost of assigning individual
$i$ into team $T_j$. These costs are used by {\reassignteams} to produce
a new assignment of individuals to teams -- always guaranteeing that the teams are of equal
size. Next, we describe the details of these the two main routines of {\matching}.

\spara{The {\assigncosts} routine:}
This routine, assigns to every worker $i$ and team $T_j$
cost $\bc(i,T_j)$, which is the cost of assigning worker $i$ to team $T_j$.
In order to compute these costs,
{\assigncosts} considers the
current teams in $\bT$ as a baseline to evaluate if the assignment of worker
$i$ to team $T_j$ can lead to fewer 
conflict triangles. Thus, an intuitive definition 
of cost is the number of conflict triangles 
that $i$ incurs when he joins $T_j$. This is equal to 
$\calF(T_j \cup \{i\})$ if $i \not \in T_j$ and  
$\calF(T_j)$  if $i \in T_j$.  

We observe that,
if worker $i$ already belongs to team $T_j$, the reassignment is not going to change the
size of the resulting team. However, if 
$i \not \in T_j$  the assigning $i$ to $T_j$ creates a team of size 
$(k + 1)$.
This is problematic, since the number of conflict triangles in teams of size $k$ is not comparable to that in teams of size $(k+1)$. 
This can be resolved by introducing a normalization factor which measures the 
\emph{maximum}
possible number of conflict triangles in a team of a fixed size. 
Formally, for a team of size $k$, we use $\Delta_k$
to denote the maximum possible number of conflict triangles that can emerge 
in the team across all features.
Now, we compute the cost function as follows:
\begin{equation}\label{eq:update2}
\bc(i, T_j) =
\begin{cases} 
\calF(T_j \cup \{i\}) / \Delta_k & \text{if\ } i \not \in T_j \\ 
\calF(T_j) / \Delta_{k+1} & \text{if\ } i \in T_j  
\end{cases}
\end{equation}

\emph{Running time:}
Note that computing all three cost functions
can be done in $O(m|W|)$ using the aggregate feature vectors as discussed
in Section~\ref{sec:eff}.

\spara{The {\reassignteams} routine:} {\reassignteams} takes as input a current
a cost of assigning each one of the $n$ individuals 
into each one of the $\ell$ teams and outputs a new partition of the 
individuals into $\ell$ equal-size groups.
The algorithm, views this partitioning problem as a minimum weight $b$-matching problem~\citep{burkard12assignment}
in a bipartite graph, where the nodes on the one side correspond to 
$n$ individuals and the nodes
on the other side correspond to $\ell$ teams. In this graph, 
there is an edge between every 
individual $i$ and team
$T_j$. The weight/cost of this edge is, for example, 
$\bc(i,T_j)$ -- computed as described above.
Finding a good partition
then translates into  picking a subset of the edges of the bipartite graph,
such that the selected edges have a
minimum weight sum,  every individual in the subgraph defined by the
selected edges has degree $1$, and each team has degree $k$. This would mean that every
worker is assigned to exactly one cluster and every cluster has exactly $k$ members.
 This is a classical $b$-matching problem that can be solved
in polynomial time using the Hungarian algorithm~\citep{burkard12assignment,kuhn55hungarian}.

\spara{Variable-size partitioning:} It is important to point out that our algorithm can be easily modified to partition a population into teams of fixed but possibly different sizes. The  {\reassignteams} routine in our algorithm computes a new assignment of individuals to teams by solving a minimum weight b-matching problem in a bipartite graph where nodes on the right represent individuals and nodes on the left represent the available spots/positions in each team. This setup gives us the flexibility to choose the number of available spots in each team. In fact, this is how the algorithm enforces equal-size teams in our current implementation.

\spara{Computational speedups:} Computing the new partition using the Hungarian algorithm,
requires $O(n^3)$ time. This is a computationally expensive operation, especially since this step needs to be completed in each iteration of 
{\matching}.
In order to avoid this computational cost, we solve the bipartite $b$-matching problem
approximately using
a greedy heuristic that works as follows:
in 
each iteration the edge $(i,T_j)$ with the lowest cost $\bc(i, T_j)$
is selected, and worker $i$ is assigned to the $j$-th team $T_j$;
this assignment only takes place if: $1)$ worker
$i$ is not assigned to any team in an earlier iteration, and $2)$ the $j$-th
team has less than $k$ workers so far (i.e., if it has not reached the desired
team size). This is repeated until all the workers are assigned to a team.

To find the minimum cost edge in each iteration we need to
sort all edges with respect to their costs and then traverse them 
in this order. Since there are $O(n\ell)$ edges, the running time of this greedy alternative
is  $O(n \ell \log (n \ell))$ per iteration. 

\section{Experiments}\label{sec:experiments}
In this section, we describe the experiments that we performed to evaluate our methodology. 

\subsection{Datasets}
\label{sec:datasets}
\bpara{\underline{\adult}:} The \adult\ dataset is a census dataset from UCI's machine learning 
repository. It contains information on $32,561$ individuals;
the features in the data are \emph{age, work class,
education, marital status, occupation, relationship, race, sex, capital-gain, capital-loss, hours-per-week,}
and \emph{native country}  \footnote{\url{https://archive.ics.uci.edu/ml/datasets/Adult}}. 
We convert non-categorical features to categorical features as follows:
for \emph{age} and \emph{hours-per-week}
we bin their values into buckets of size $10$. 
Also, we convert both \emph{capital-gain}
and \emph{capital-loss} into binary features depending whether 
their value is equal to zero or not.

\bpara{\underline{\census}:} The \census\ dataset is extracted from the US government's "Current
Population Survey'' \footnote{\url{http://thedataweb.rm.census.gov/ftp/cps_ftp.html}}. We focused
on the most recent collected data from the year $2014$. Our dataset contains 
census information on
$200,469$ individuals. The dataset includes the following features: \emph{marital status, gender, education,
race, country, citizen,} and \emph{army}.

\bpara{\underline{\dblp}:} The \dblp\ dataset 
is created by using the latest snapshot of the 
DBLP website and filtering only authors that published papers on tier-1 and tier-2
computer science (NLP, IR, DM, DB, AI, Theory, Networks) 
conferences and journals \footnote{\url{http://webdocs.cs.ualberta.ca/~zaiane/htmldocs/ConfRanking.html}}.
Although the only known attribute in the raw dataset is the 
\emph{country of origin},
we extracted the following features for each of the $57,972$ authors, based on their publications: \emph{number of years active}, 
\emph{primary area of focus} (based on number of publications),\emph{average
number of publications in ten years}, and \emph{total number of
publications}. We also computed a \emph{quality} feature for each 
author, by giving her 2 points for each paper published in a top-tier conference and 1 point for all other papers. 
We bin both the \emph{total number of publications} and the \emph{average number of publications} into buckets of size $10$, and bin the \emph{quality} score into buckets of size $5$.

\bpara{\underline{\bia}:}
This dataset is collected from entry surveys taken by all students who take the Analytics course offered by one of the authors of this paper.
The data was collected during 6 different semesters and includes data from 502 graduate students. It consists of 85 teams, with an average of 5.9 students per team. For each student,
the dataset includes the major of the degree they were pursuing at the time of the data collection, the major of their bachelor's degree, gender, country,  and a self-assessment of her level with respect to machine learning, analytics, programming, and experience with team projects. The assessments
are given on a scale from 0 (no experience) to 3 (very experienced). 
For each team, we also have its performance (on a scale of 0 to 100) on a collaborative, semester-long project that accounts for $70\%$ of the entire grade, as well as the average satisfaction level (on a scale of 0 to 7) of the team's members with the way the team operated. For each team we computed tension (bad triangles) for each team across all features.

\mpara{\underline{\synthetic}:}
In order to control the number of conflict triangles in our data, we have developed a method
to create synthetic datasets given a target percentage of conflict triangles.
First, we assume that our pool of workers $W$ is going to consist of a 
single  feature which
can only take $3$ different values 
$X$, $Y$, and $Z$. Let's define $x$, $y$, $z$ to be the 
number of data points with these values respectively. 
Now, it is clear that $\calN(W) = x \times y 
\times z$. On the other hand, given that total number of workers is $n$ we have $x + y + z = 
n$. Note that if the value of $x$ is given, we can use these equations to compute the value of 
$y$ and $z$ as well. To create our datasets, we try different values of $x$ and then 
we solve for variables $y$ and $z$. Then, we 
randomly partition workers into three groups of size $x$, $y$, and $z$ and assign
the value $X$, $Y$, and $Z$ to them respectively.

\mpara{\underline{\synthetictwo}:} In order to compare different faultline measures --{\asw}, Subgroup Strength ({\sss}),
and our {\ct} measure-- we generate a dataset as follows. We consider three features: \emph{Race (Asian, White, Black, Native American)},
\emph{Country (USA, China, England, France)}, and \emph{Education (High-school, Undergraduate, Graduate)}. Then, given a team size
$TS$ and a number of subgroups $SN$, we generate $100$ teams that include $TS$ individuals divided into $SN$ completely homogeneous
subgroups.  Within each subgroup, all individuals have the same value for each feature $F$. This value $V$ is selected with 
a probability that is inversely proportional to the number of subgroups in the team that has already been assigned $V$ for this feature. This process allows us to create perfectly homogenous groups that are highly dissimilar from each other. We repeat the process
for $TS\in\{4, 8, 16, 32, 64\}$. Given a value for $TS$, we start with $SN=1$ (a perfectly homogeneous team) and double the value until $SN=TS$ (one individual per subgroup). For instance, for $TS=16$, we consider $SN\in\{1,2,4,8,16\}$. This process generates a
total of $3100$ teams. Controlling the number of perfectly homogenous subgroups allows us to control diversity and simulate multiple scenarios of conflict between different types of subgroups within the team.

\begin{table*}
  \centering
  \topcaption{Statistics for the real datasets.}
  \begin{tabular}{@{}lccccc@{}}
  \toprule  
  \textbf{Dataset} & \textbf{Size} & \textbf{Features} & \% of conflict triangles\\
  \midrule
  \midrule
  \dblp & 57,972 & 6  & 35\% \\
  \adult & 32,561 & 12 & 41\% \\
  \census & 200,469 & 7 &  44\% \\
  \dblpAug & 155 & 9 & 47\% \\
  \bia & 502 & 8 & 62\% \\
  \synthetic & 400 & 8 & 8\%\\
  \bottomrule
  \end{tabular}
  \label{table:freq}
\end{table*}

\mpara{Discussion:} 
Table~\ref{table:freq} shows some basic statistics for our datasets. As mentioned earlier,
the {\synthetic} dataset allows us to tune the percentage of different types of triangles.
The synthetic instance reported in Table~\ref{table:freq} corresponds to a dataset of size
$400$ with $8$ features where we set the percentage of negative and positive triangles
to $8\%$ and $25\%$ respectively.
Fig~\ref{fig:corr} illustrates the \emph{Cramer's V} values for all 
pairs of features in all datasets. Cramer's V value
is a standard measure the correlation between
two categorical variables~\citep{cramer2016mathematical}. It has a value of $1$ when two variables are perfectly correlated
and $0$ if there is absolutely no correlation.
The figure illustrates that {\adult} and {\census} are similar in terms of feature correlation. Specifically,
we observe a small correlation for the majority of the features
and only a couple of them with high correlations. On the other hand, {\dblp} exhibits significantly
higher correlation patterns. 

\begin{figure*}
\centering
\subfloat[{\dblp}]{
\includegraphics[scale=.35]{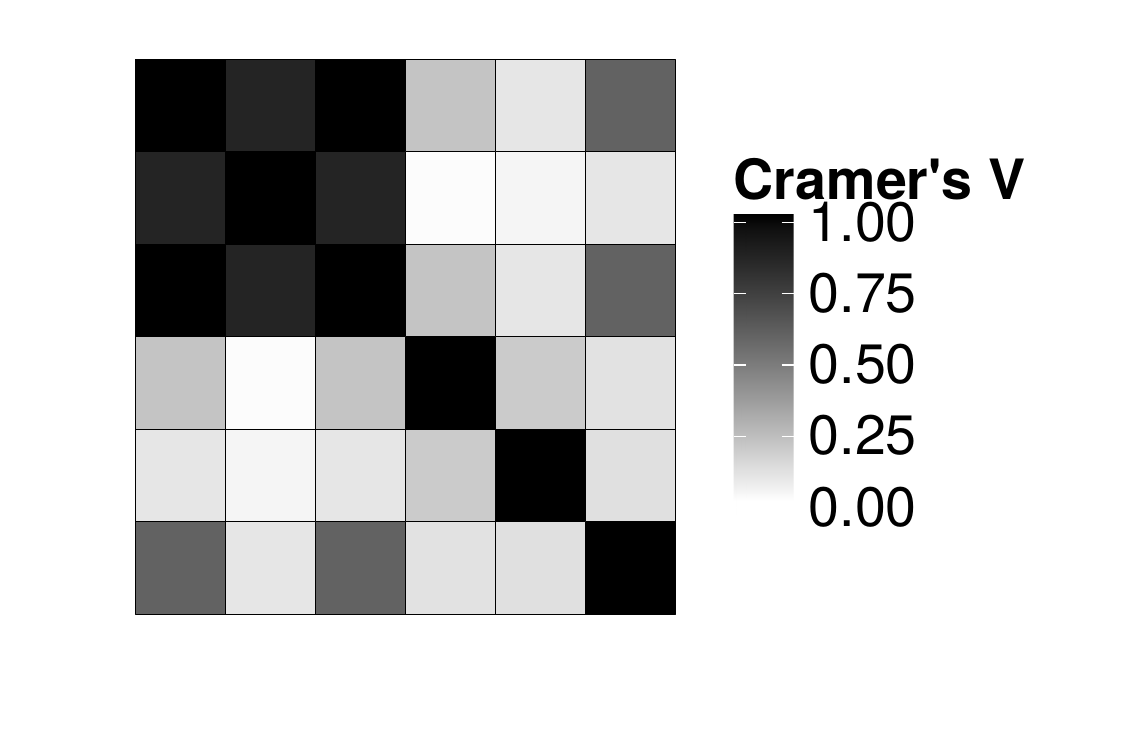}
\label{fig:dblp_corr}
}
\subfloat[{\adult}]{
\includegraphics[scale=.35]{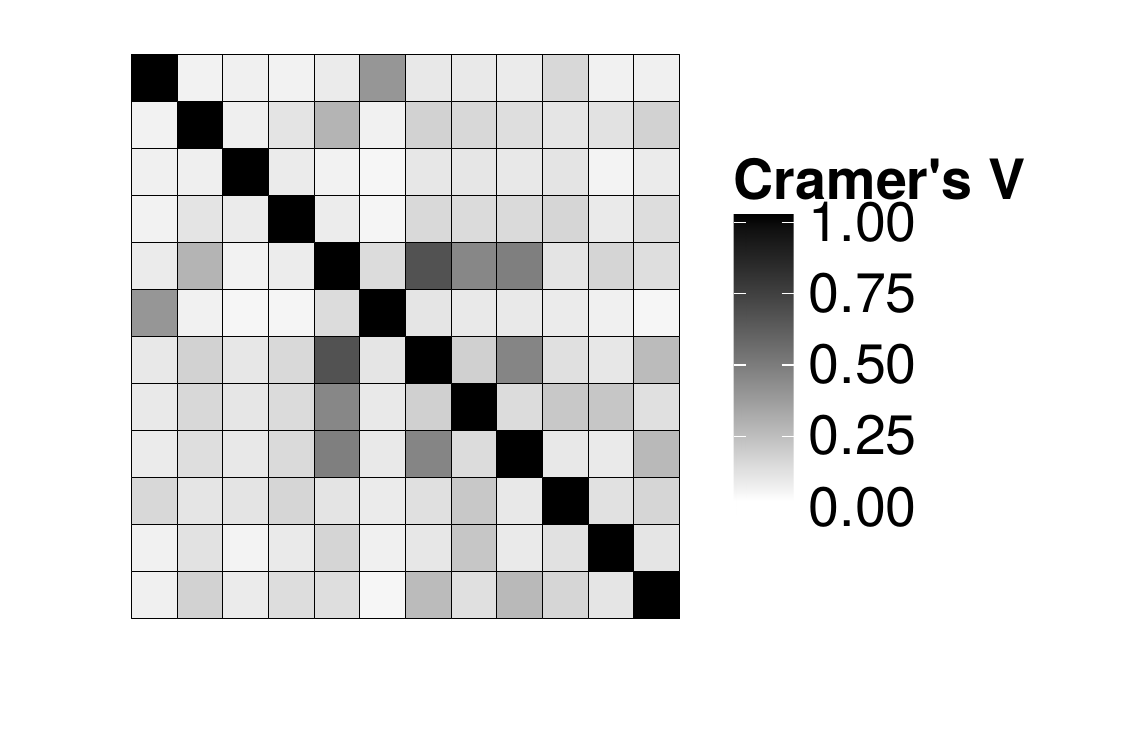}
\label{fig:adult_core}
}
\subfloat[{\census}]{
\includegraphics[scale=.35]{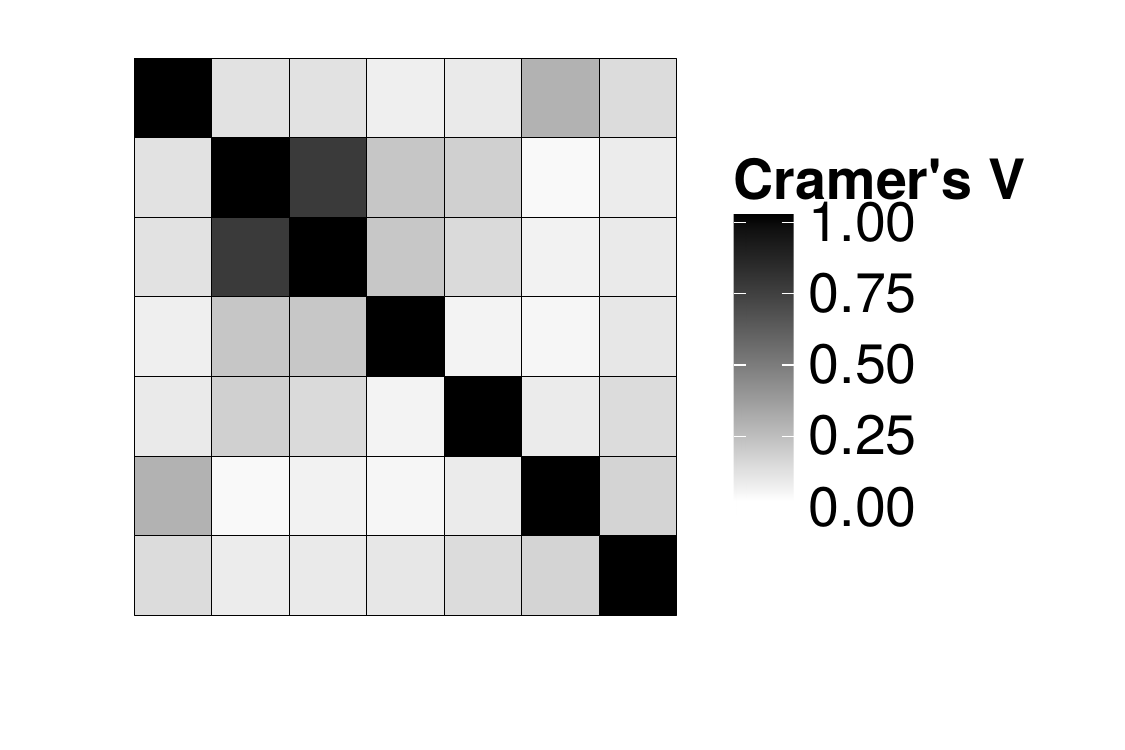}
\label{fig:census_core}
}  \\
\subfloat[{\synthetic}]{
\includegraphics[scale=.35]{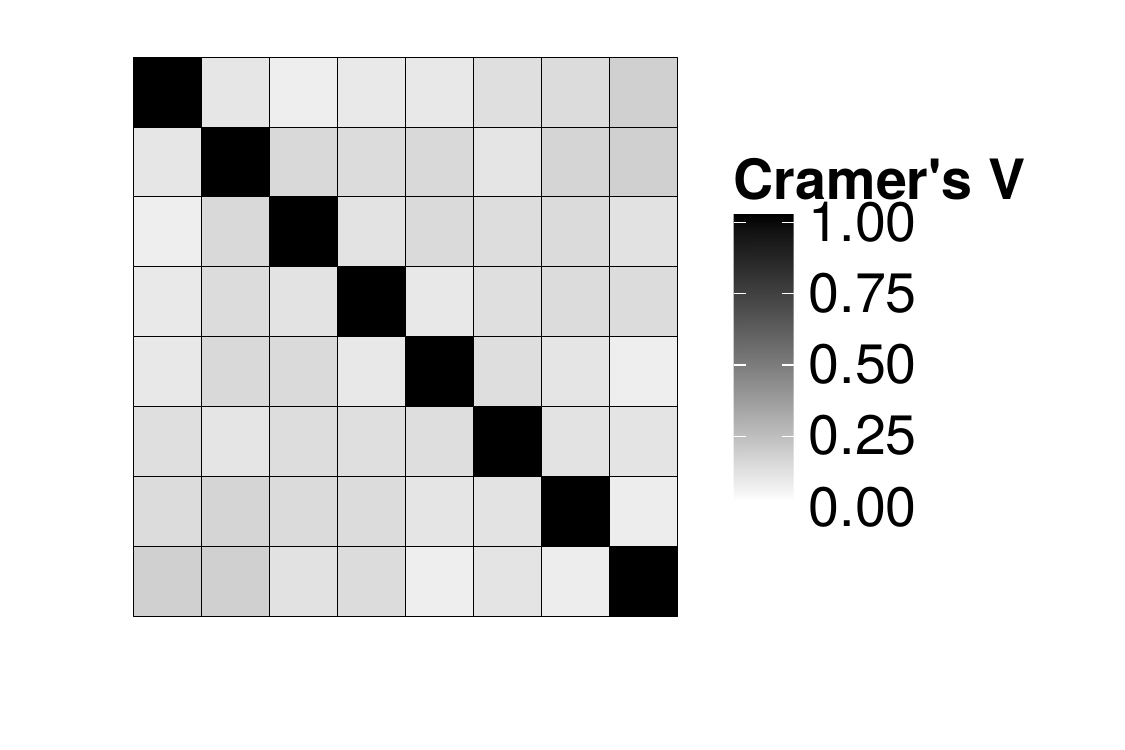}
\label{fig:synthetic}
} 
\subfloat[{\bia}]{
\includegraphics[scale=.35]{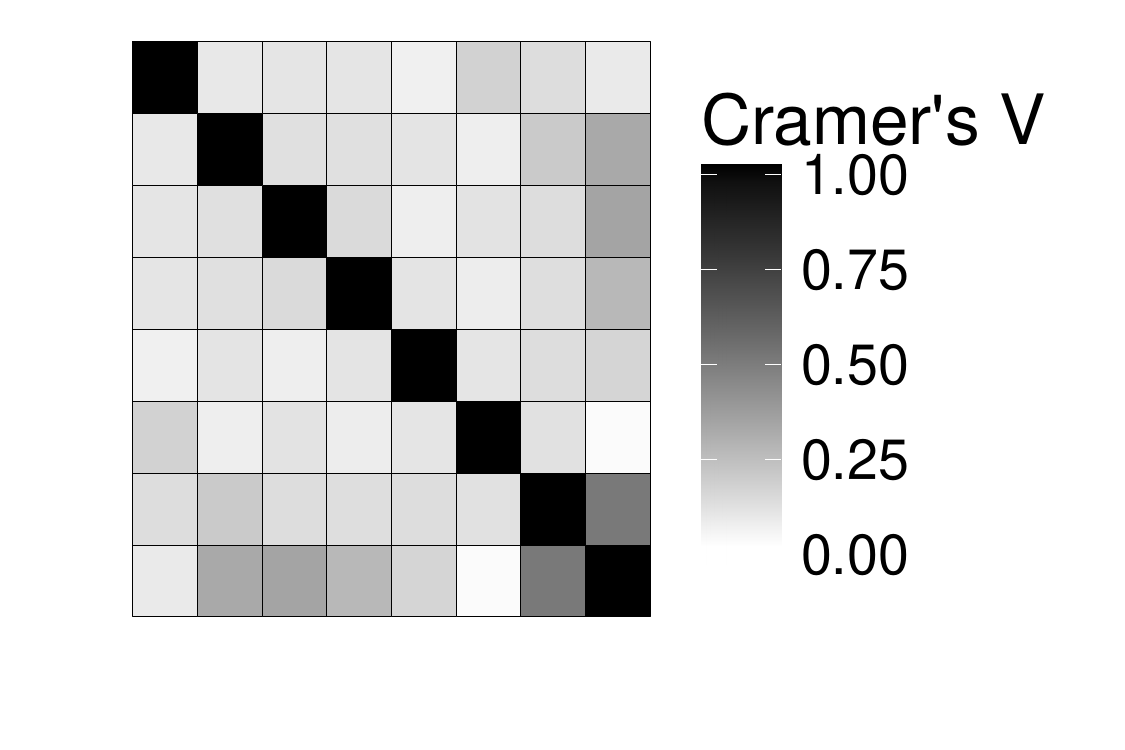}
\label{fig:bia_core}
} 
\subfloat[{\dblpAug}]{
\includegraphics[scale=.35]{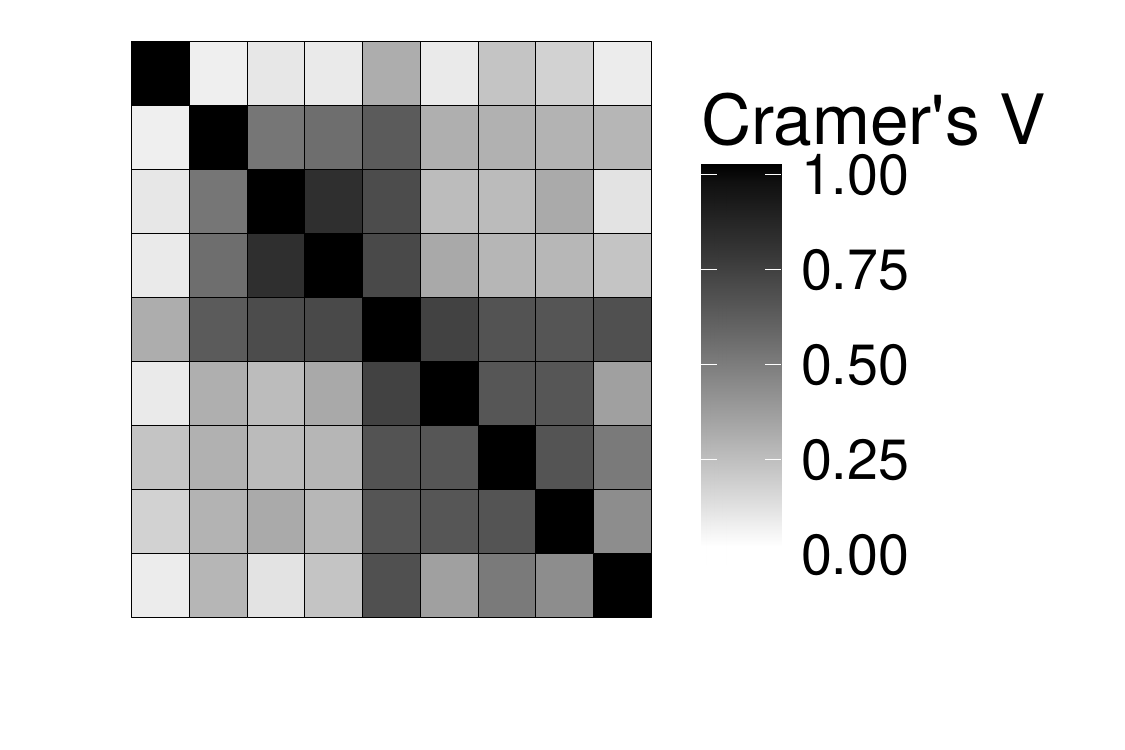}
\label{fig:users_core}
} 
\caption{\label{fig:corr} The Cramer's V values for all pairs of features for all datasets.}
\end{figure*}

\subsection{Evaluation on the {\partition} problem}
In this section, we evaluate the performance of our algorithms for the\\ {\partition} problem.

\mpara{\underline{Baselines:}} We compare our {\matching} algorithm with two baselines: {\greedy} and {\dya}.
The {\greedy} algorithm takes an iterative approach that creates a single team in each iteration and thus it requires
$\ell$ iterations to create all $\ell$ teams. Each team is constructed as follows.
First, the algorithm selects two random workers. It then continues by greedily adding the worker that minimizes
the faultline score of the team. Once the size of team reaches $k$, the algorithm removes the selected members from the pool of experts
and moves on to build the next team. Finally, {\dya} is a clustering algorithm that tries to create
equal-size partitions such that the number of positive (negative) edges within the teams is maximized (minimized)\texttt{~\citep{malinen14balanced}}.  

\mpara{Evaluation metric:} 
for every algorithm, we measure its performance via the faultline potential of the set of teams that it creates, as per Equation~\ref{eq:part}. Because some of our comparisons require plotting results obtained from datasets of different sizes in the same
figure, we apply the following dataset-specific normalization. For a dataset of size
$n$, we divide the faultline potential
of a partitioning obtained for this dataset with the 
the total number of triangles that can be encountered in datasets of this size, 
i.e., ${\binom{n}{3}}$. Thus, the $y$-axis
of all our plots is in $[0,1]$.

\subsubsection{Varying the population size} For each dataset, we randomly select, with replacement,
$100$ sets of $n$ individuals, for $n\in\{100,200,400,800,1600\}$. We then use the
algorithms to partition each set into teams of size $5$. For each algorithm,
we report the average faultline potential achieved over all sets
for every value of $n$, along with the corresponding $90\%$
confidence intervals. The results for all three datasets are shown in
Figure~\ref{fig:population_size}. We also report the computational time (in seconds) of each algorithm for each value of $n$ in 
Figure~\ref{fig:runtimes}.

\begin{figure*}
\centering
\includegraphics[scale=.4]{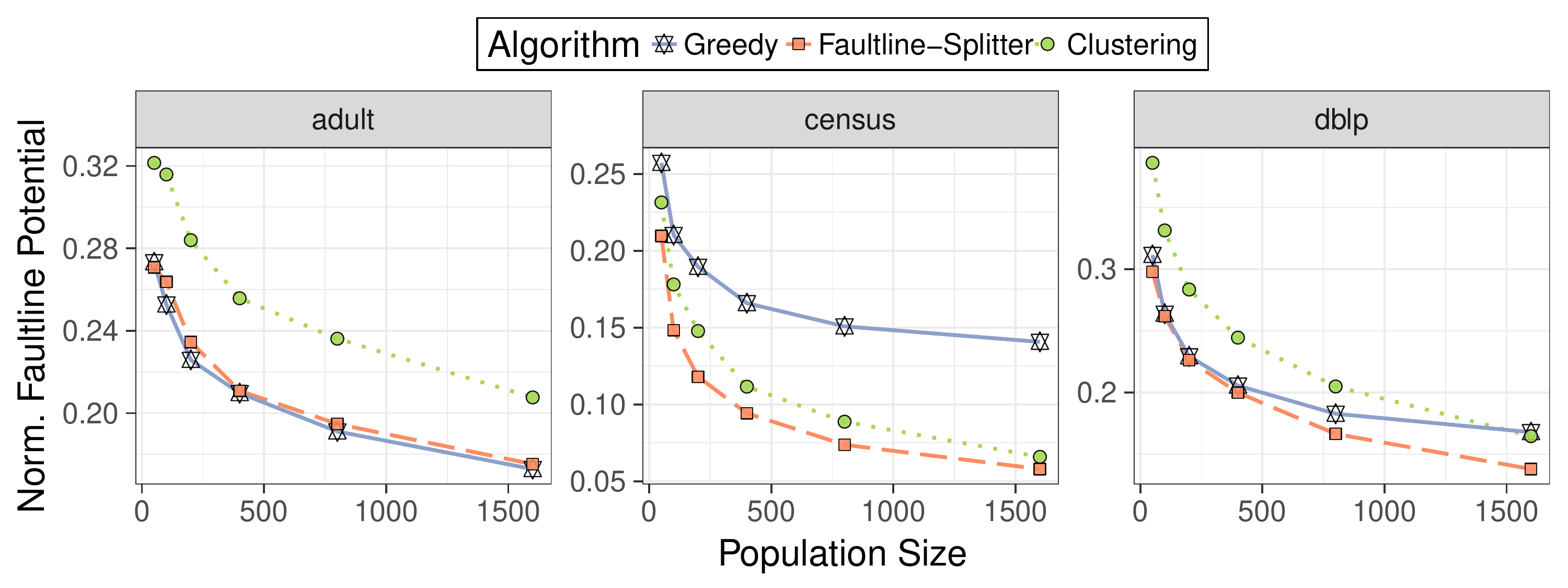}
\caption{\label{fig:population_size} Faultline results for different population sizes (parameter $n$)}
\end{figure*}

The first observation is that all the algorithms perform better as the size of the population
increases, with the achieved normalized faultline potential values ultimately converging to a low value around $0.1$, for all datasets. An examination of the data reveals that we can confidently attribute this trend to the fact that increasing the size of the population leads to the introduction of identical or highly similar individuals (i.e. in terms of their feature values). This makes
it easier to form low-faultline teams. This is not a surprising finding in real datasets, which tend to include large clusters of similar points, rather than points that are uniformly distributed within the multi-dimensional space defined by their features.

We observe that The {\matching} algorithm consistently achieves the best results across datasets, while the
the {\greedy} algorithm outperforms {\dya} in two of the three datasets {\dblp} and {\adult}. This reveals a weakness of {\dya}: its inability to consistently deliver low-faultline solutions as the population becomes larger. On the other hand, the {\matching} algorithm does not exhibit this weakness, emerging as both the most stable and effective approach.
Finally, as in the previous experiment, the algorithms exhibit a negligible variation over the different samples that we considered
for each value of the parameter.

\begin{figure*}
\centering
\includegraphics[scale=.4]{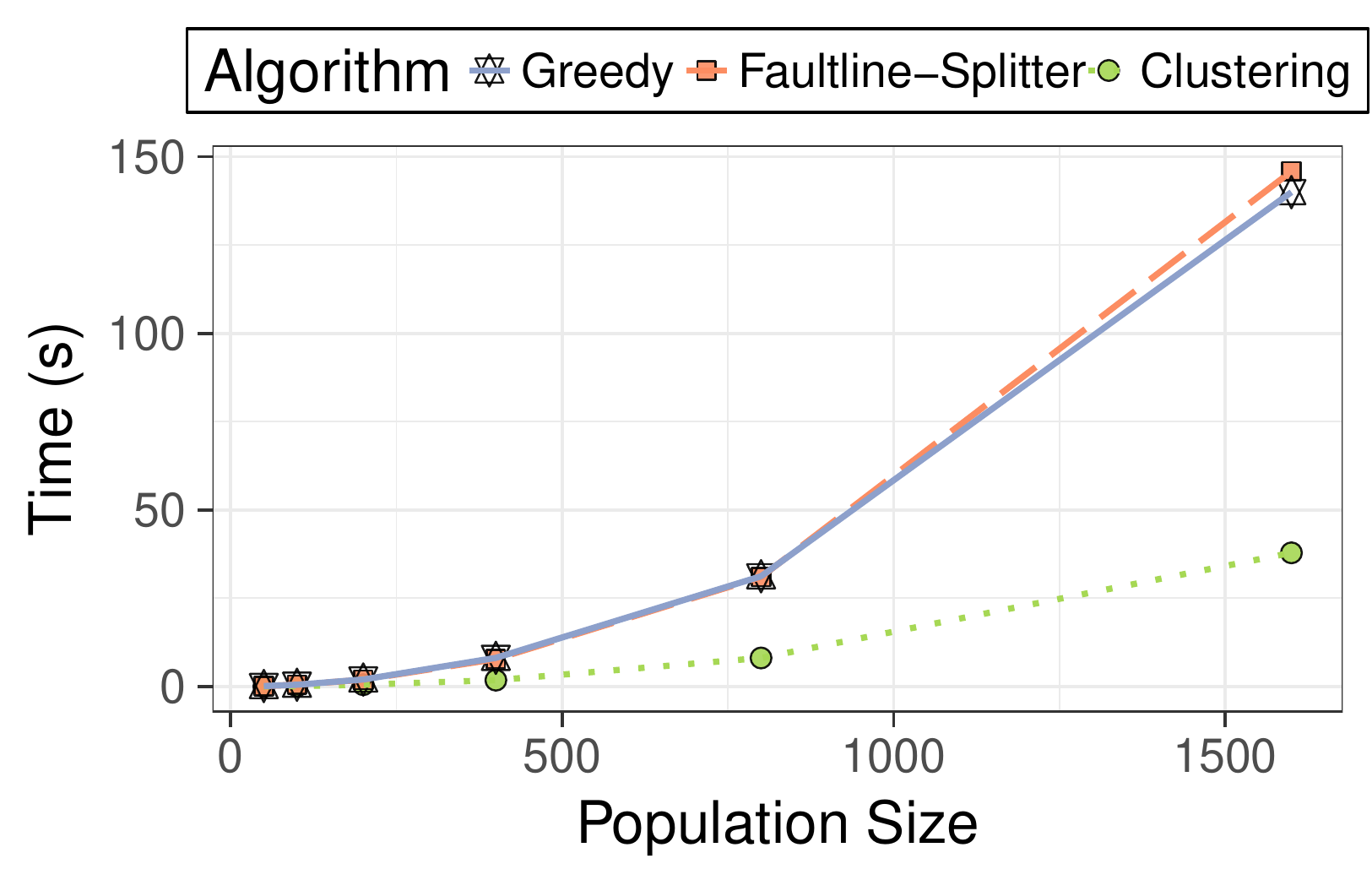}
\caption{\label{fig:runtimes} Running times for different population sizes (parameter $n$)}
\end{figure*}

With respect to computational time, Figure~\ref{fig:runtimes} verifies that {\matching} can scale to large population sizes.
Using the \census\ dataset, we observe that, even for the largest population of $1600$ individuals, the algorithm computed the solution in less 2 minutes.
In fact, its speed was nearly identical to that of the greedy heuristic. Finally, while the {\dya} algorithm emerges as the fastest
option, this comes at the cost of inferior solutions (i.e. teams with higher faultline potential), as we demonstrated in Figure~\ref{fig:population_size}.

\subsubsection{Varying the team size} For this experiment, we set the size of the population of individuals to $|W|=800$.
For each real dataset, we randomly select $100$ populations, of $800$ individuals each, with replacement. We then use the algorithms to
partition each population into teams of size $k$, for $k\in\{3,4,5,\cdots,20\}$. For each algorithm, we report the average normalized
faultline potential achieved over all population for every value of $k$, along with the corresponding $90\%$ confidence intervals. 
The results for \adult, \census\ and \dblp\ datasets are shown in Fig~\ref{fig:team_size}.

\begin{figure*}
\centering
\includegraphics[scale=.4]{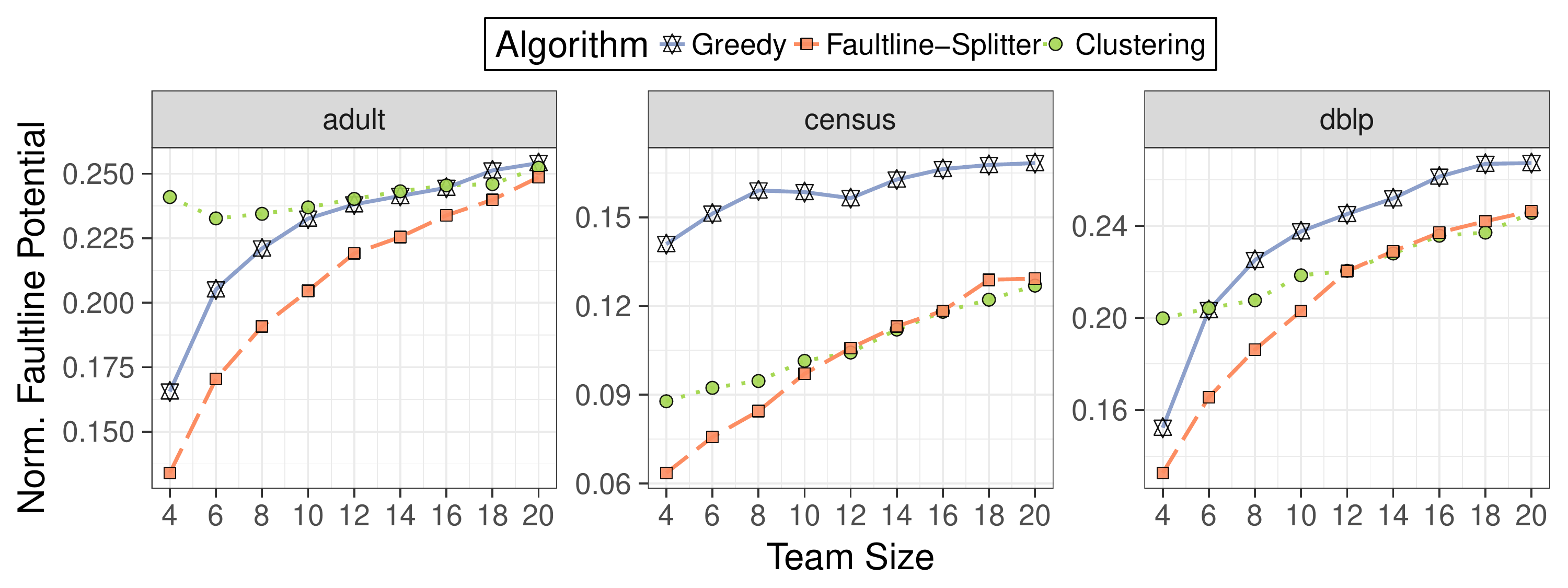}
\caption{\label{fig:team_size} Faultline results for different team sizes (parameter $k$)}
\end{figure*}

We observe that the {\matching} algorithm had the overall best
performance across datasets. We observe that its advantage wanes as the value of $k$ increases. This can be explained by the fact that asking for larger teams makes the problem harder, as it requires the inclusion of additional individuals and thus makes it harder
to avoid the introduction of conflict triangles into the team. This explanation is also consistent with the fact that the performance
of the two algorithms tends to decrease as $k$ becomes larger. A second observation is that the {\greedy} algorithm is consistently outperformed by both {\matching} and {\dya}. This demonstrates the difficulty of the {\partition} problem and the need
for sophisticated partitioning algorithms that go beyond greedy heuristics. Finally, as shown in the figure, we observe that 
the standard deviations for all algorithms were consistently negligible, bolstering our confidence in the reported findings.

\subsubsection{Varying the number of conflict triangles} The purpose of this experiment is to
evaluate the algorithm on populations with different potential for faultlines. 
While random samples obtained from our real-world datasets differ trivially in terms of the 
percentage of conflict triangles, we can engineer synthetic data to obtain
datasets with different number of conflict triangles. To conduct
this experiment, we use the {\synthetic} dataset described in Section~\ref{sec:datasets}.
We consider populations of $400$ individuals and set
the team size equal to $5$. The results are shown in Fig~\ref{fig:bad_freq}.
The plot verifies that finding low-faultline teams becomes harder as the population's inherent
potential for such faultlines increases. However, the {\matching} algorithm consistently outperforms the other
methods. In fact, the gap between the two algorithms increases as the number of conflict triangles
in the population increases. This demonstrates the superiority of the {\matching} algorithm over the other approaches 
in terms of searching the increasingly smaller space of low-faultline solutions.

\begin{figure}[t]
\begin{center}
\includegraphics[scale=.37]{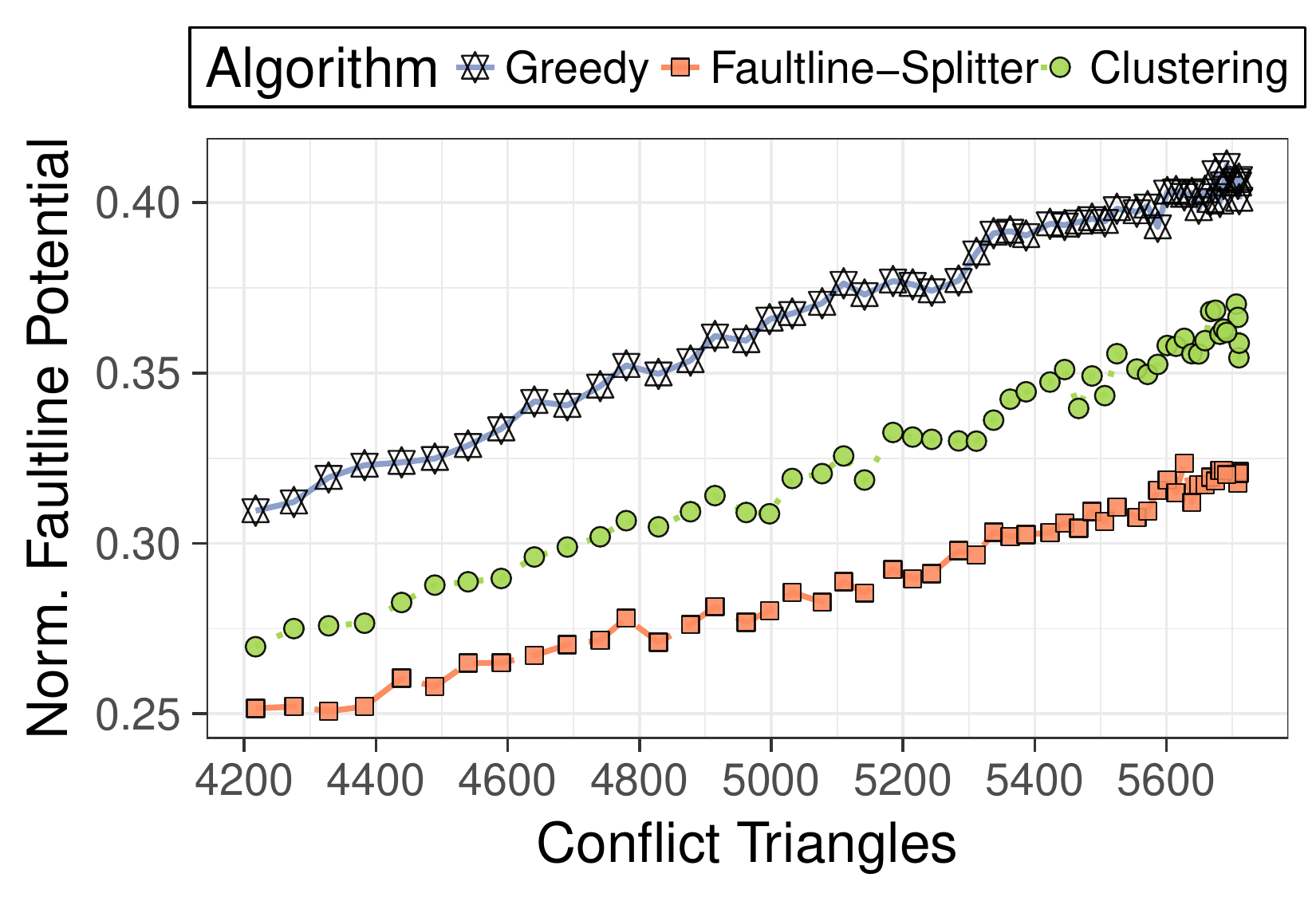}
\end{center}
\caption{\label{fig:bad_freq} Performance of all algorithms for 
synthetic datasets with different number of bad triangles ($400$ workers, teams
of size $5$ and $10$ features)}
\end{figure}

\subsection{Faultline Measurement in existing teams}
\label{exp:meas}
In this section, we compare three alternative options for faultline measurement in existing teams: the proposed {\ct} measure, the {\asw} by~\citep{meyer2013team}, and the Subgroup Strength ({\sss}) measure by~\citep{gibson2003healthy}. We select the {\asw} due to its status as the state-of-the-art, even though, as we discussed in detail in Section~\ref{sec:related}, it is not appropriate for the {\partition}
problem that is the main focus of our work. We select the {\sss} measure because it combines the simplicity and computational efficiency
required for the {\partition} problem with competitive results in previous benchmarks~\citep{meyer2013team}.

\subsubsection{A Comparison on Synthetic Teams}
For this study, we use the {\synthetictwo} dataset which, as we describe in Section~\ref{sec:datasets}, 
includes teams of various sizes and subgroup composition. First, we group the teams according to size. We then use each of the
three faultline measure to evaluate the teams in each group. Finally, we compute the Pearson Correlation Coefficient (PCC) between
every pair of measures. We present the results in Figure~\ref{fig:pearson}. Then, in Figure~\ref{fig:synth_speed} we report the average
computational time needed to compute the score of each team for each of the three measures.

\begin{figure}[ht]
\begin{center}
\begin{tabular}{c}
\subfloat[Pairwise Correlation of Faultline Measures]{
\includegraphics[scale=.35]{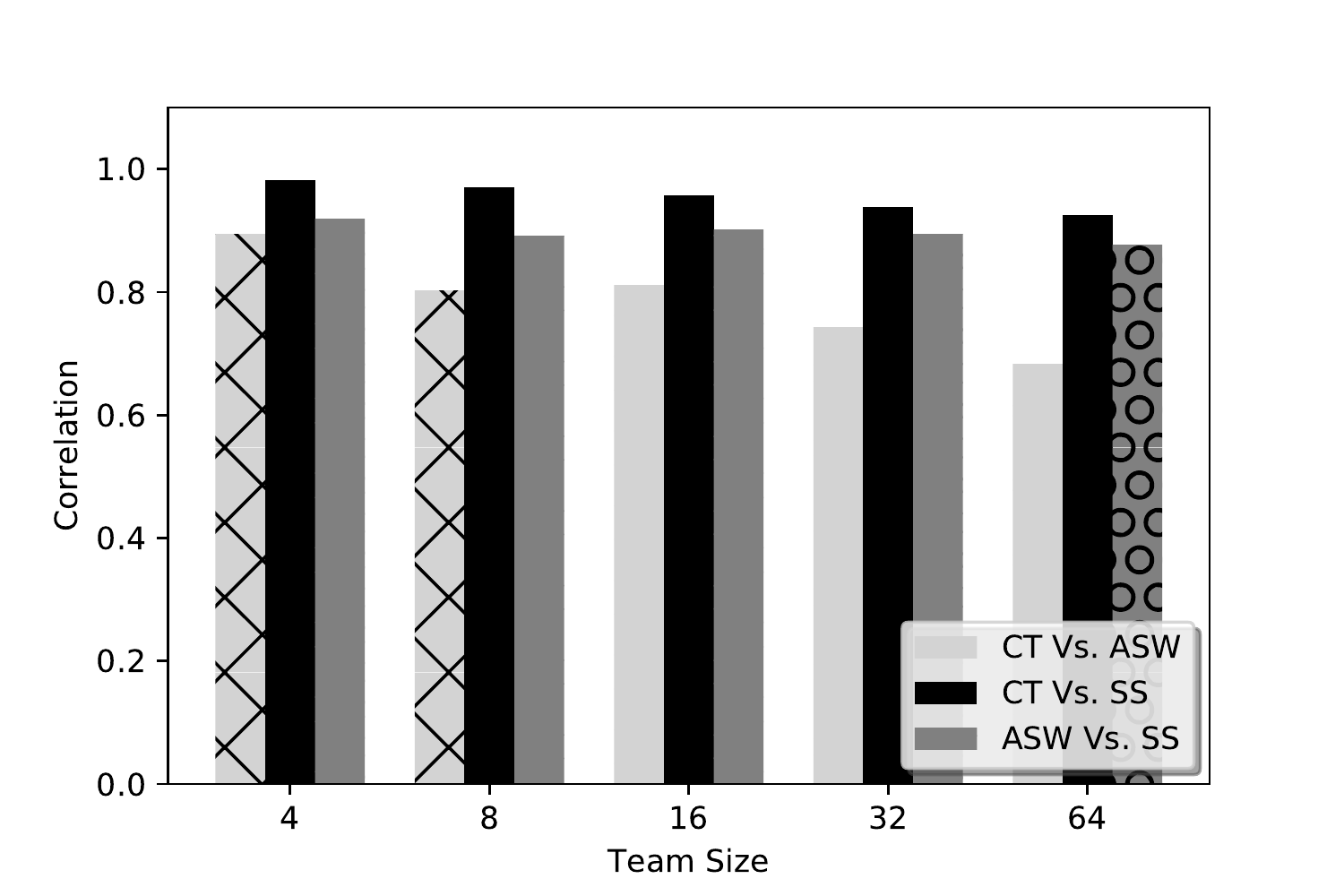}
\label{fig:pearson}
} 
\subfloat[Mean Computational Time]{
\includegraphics[scale=.35]{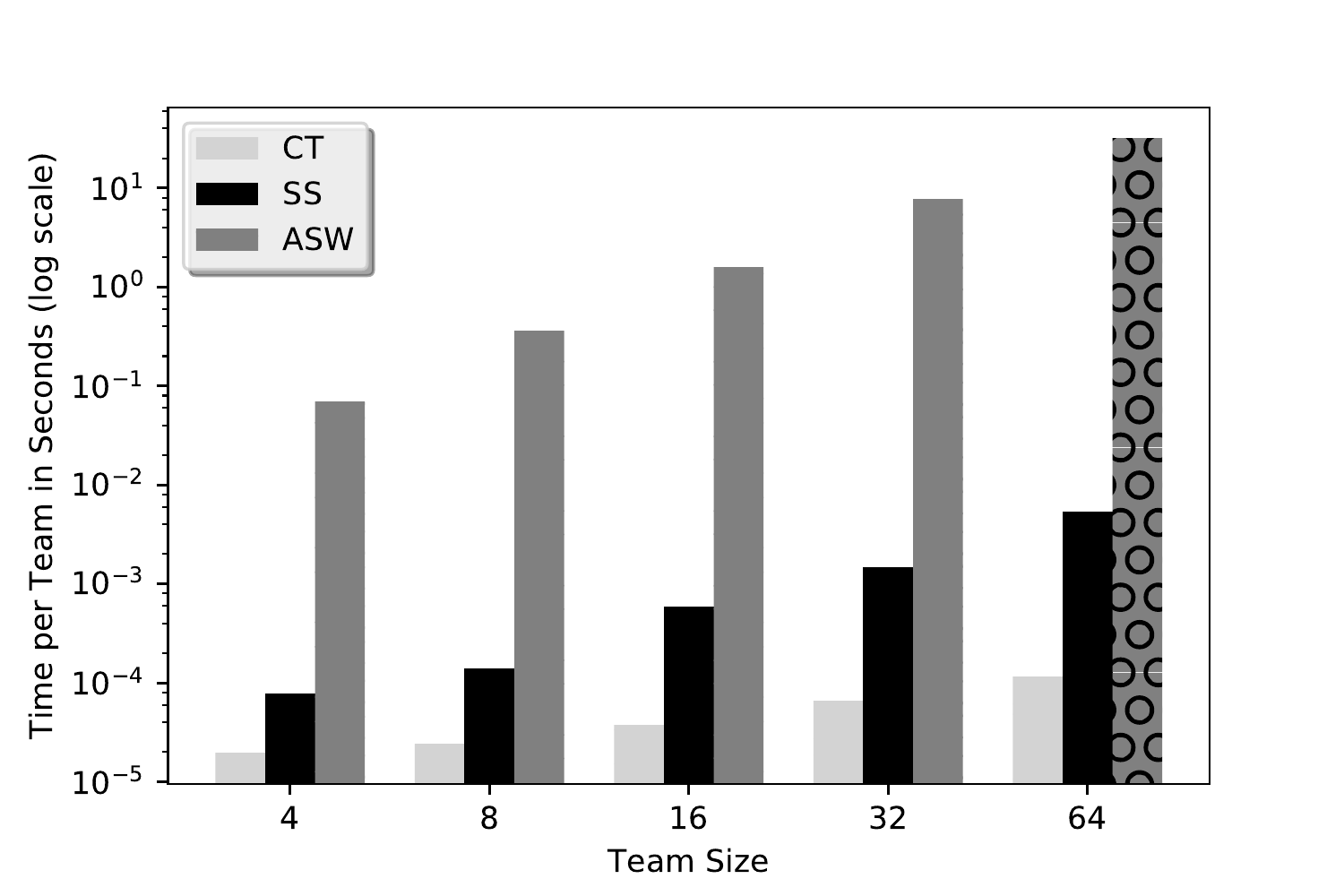}
\label{fig:synth_speed}
}\end{tabular}
\end{center}
\caption{\label{fig:synth}A comparison of the {\asw}, Conflict Triangles ({\ct}), and Subgroup Strength ({\sss}) faultline measures on the 
{\synthetictwo} dataset.}
\end{figure}

The first observation from Figure~\ref{fig:pearson} is that all three measures report similar scores across team sizes, 
with the pairwise PCC over $0.65$. Hence, while the three measures follow different
measurement paradigms, their results tend to be consistent. However, the bars also reveal that the correlation between
{\sss} and {\ct} measures was the highest among all possible measure-pairs. In fact, the observed PCC value for this pair was consistently
around $0.9$, revealing near-perfect correlation. This is intuitive if we consider the nature of the two measures: the conflict triangles counted by the {\ct} measure include, by definition, a pair of team members that are also identified as ``overlapping'' by
the {\sss} measure. A key difference between the two measures is that {\ct} does not consider all-positive triangles (i.e. a triplet of team members with the same value for a feature, see Fig.~\ref{fig:theseauthors-pos}), while {\sss} would consider all 3 dyads in such a triangle as overlaps. However, the results reveal that this difference does not significantly differentiate the results of the two measures, possibly due to the fact that {\sss} does not follow the {\ct}'s counting paradigm and, instead, aggregates overlap sums via the standard deviation.

With respect to computational time, Figure~\ref{fig:synth_speed} verifies the theoretical analysis that we presented in Section~\ref{sec:related}. The y-axis represents the average time (in seconds) required to compute the score for a team, \underline{in log scale}. As we discussed in detail in Section~\ref{sec:problem}, any algorithm for the {\partition} problem has to quickly consider a large number of candidate teams in order to efficiently locate (or approximate) the best possible partitioning. We observe that {\asw} is orders of magnitude slower than the other two measures, with the gap growing rapidly with the size of the teams. In addition, while the {\sss} and {\ct} measures can be easily updated
in constant time as the algorithm makes small changes to the team's roster, this is not the case for {\asw}. In short, while {\asw} may indeed be a competitive option for faultline measurement, our analysis and experiments verify that it is not a good candidate for faultline-optimization problems, such as the one that we study in this work. Out of two fastest measures, 
{\ct} displays a clear advantage over {\sss}. We observe that it is several times faster and, as in the case of {\asw}, the gap grows rapidly
with the size of the population. The results verify the effectiveness of our methodology for computing {\ct}, which we discuss in detail in Section~\ref{sec:eff}. They also demonstrate that, while two measures might satisfy the efficiency principles that are
necessary for efficient faultline-minimization in teams, one of the two can still have a significant computational advantage
that makes it more appropriate for large populations.

\subsubsection{A Comparison on Real Teams}
\label{exp:660}

For this study we use the~\bia\ dataset, which includes two outcomes: (i) the team's performance (represented by its grade) and (ii) the average \emph{satisfaction} of the team's members with their overall collaborative experience. In Figures~\ref{fig:bia1} and ~\ref{fig:bia2} we visualize the performance of each team against its corresponding {\ct}, {\sss}, and {\asw} scores. We observe that performance has a strong
negative association with the {\ct} and {\sss} scores, as demonstrated by the slope of the line. In contrast, the corresponding line for the {\asw}
measure is nearly parallel to the x-axis, suggesting a lack of correlation. This finding is verified by the Pearson Correlation Coefficient (PCC) values for the {\ct}, {\sss}, and {\asw} measures, which were $-0.21$, $-0.23$ and $-0.04$, respectively. Note that a negative correlation
is intuitive, as it means that lower faultlines are associated with higher performance.

In Figures~\ref{fig:bia3} and ~\ref{fig:bia4} we visualize the \emph{satisfaction} of each team against its corresponding {\ct}, {\sss}, and {\asw} scores. The results are consistent with the performance analysis: satisfaction exhibits a strong negative association the {\ct} measure,
while its association with {\asw} is very weak. In fact, the correlation of satisfaction with {\ct} and {\sss} appears to be even stronger than
that of the team's performance. Again, these findings are verified by the PCC values for the {\ct}, {\sss}, and {\asw} measures, which were $-0.26$, $-0.34$ and $0.02$, respectively.

\begin{figure}[ht]
\begin{center}
\begin{tabular}{c}
\subfloat[Performance Vs {\ct}]{
\includegraphics[scale=.35]{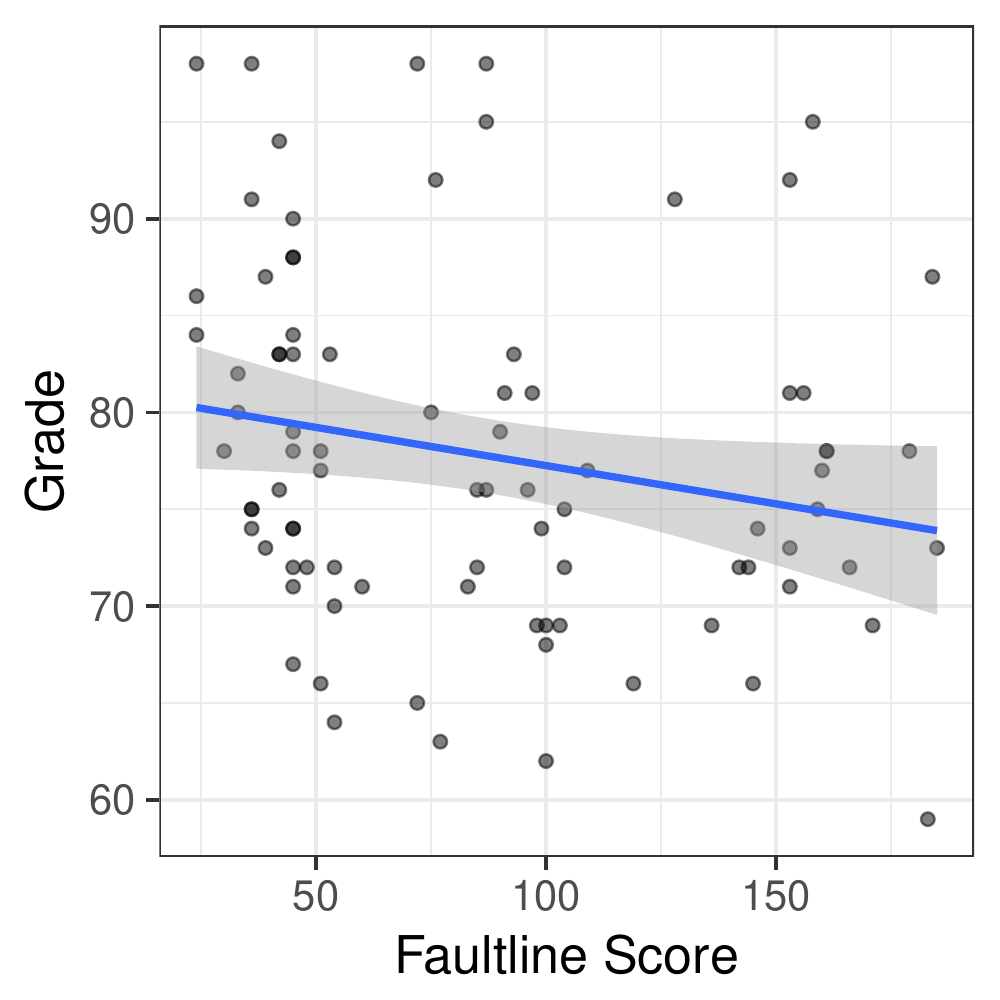}
\label{fig:bia1}
} 
\subfloat[Performance Vs {\asw}]{
\includegraphics[scale=.35]{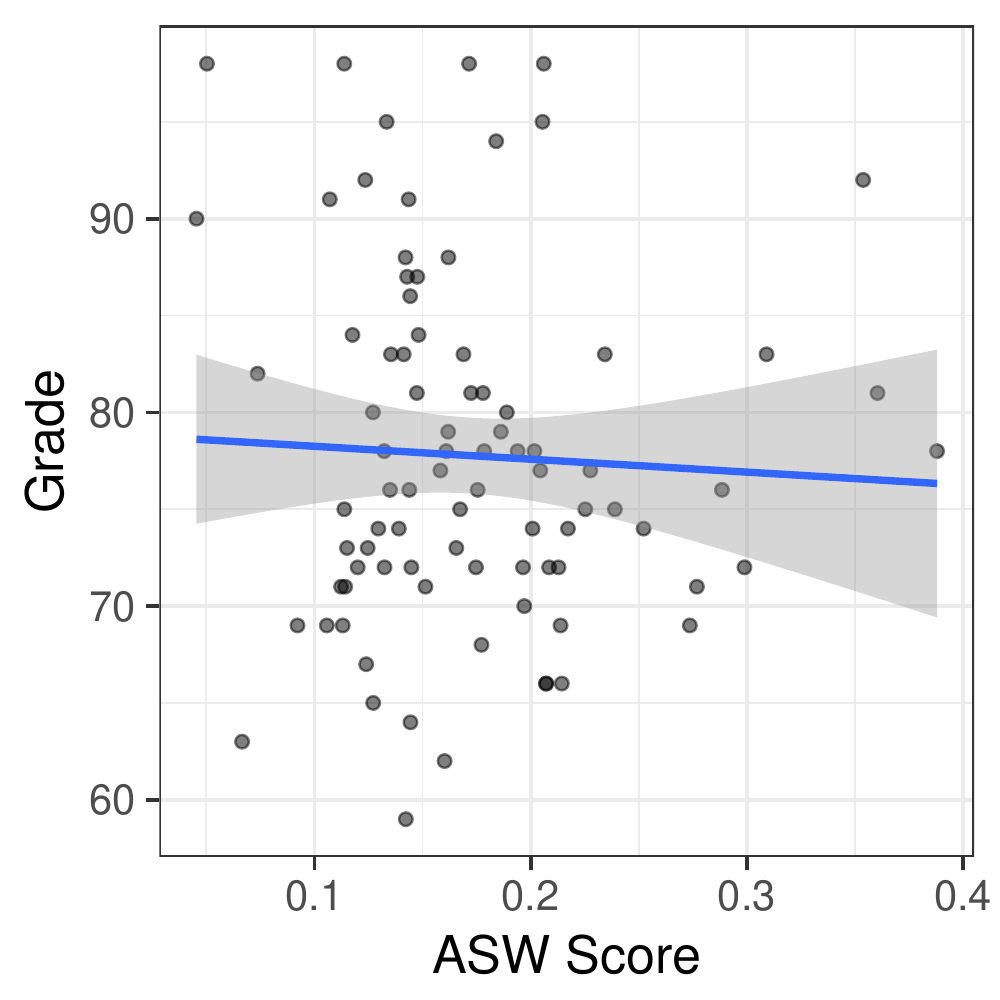}
\label{fig:bia2}
}
\subfloat[Performance Vs {\sss}]{
\includegraphics[scale=.35]{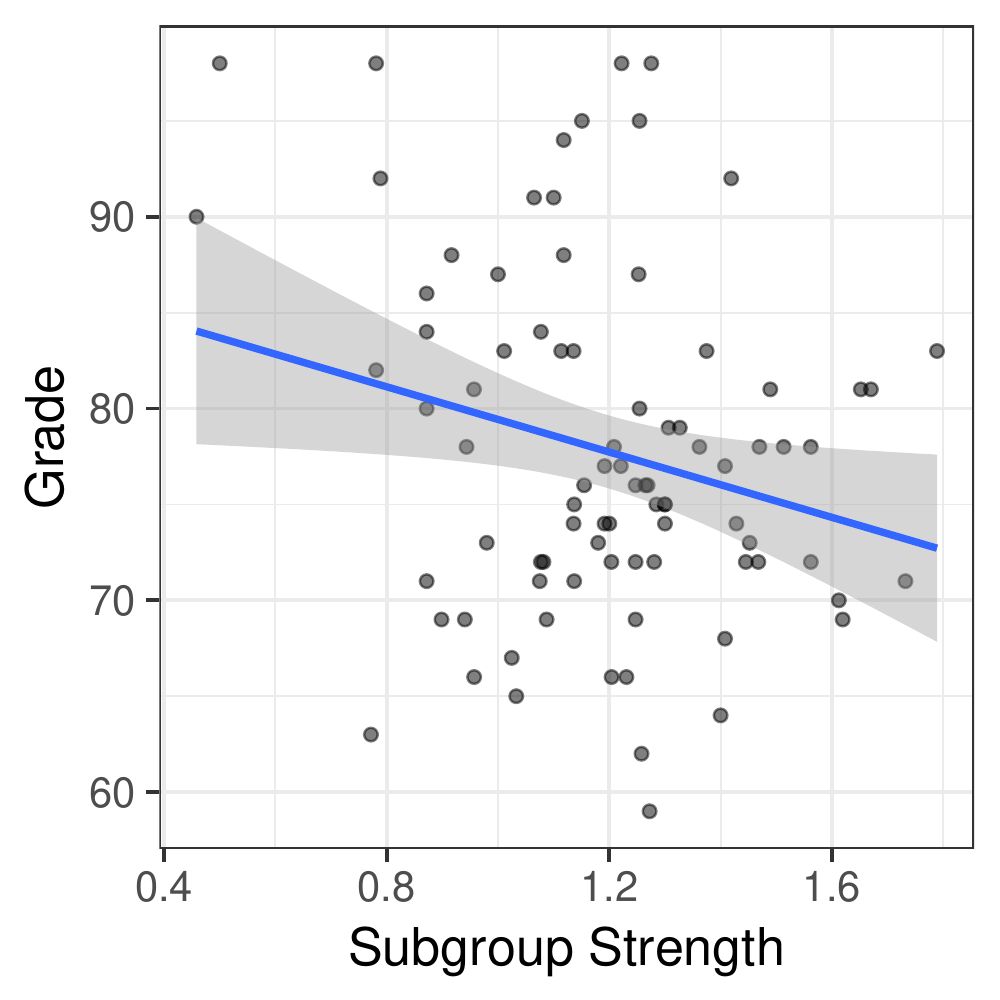}
\label{fig:bia3}
}
\\
\subfloat[Satisfaction Vs {\ct}]{
\includegraphics[scale=.35]{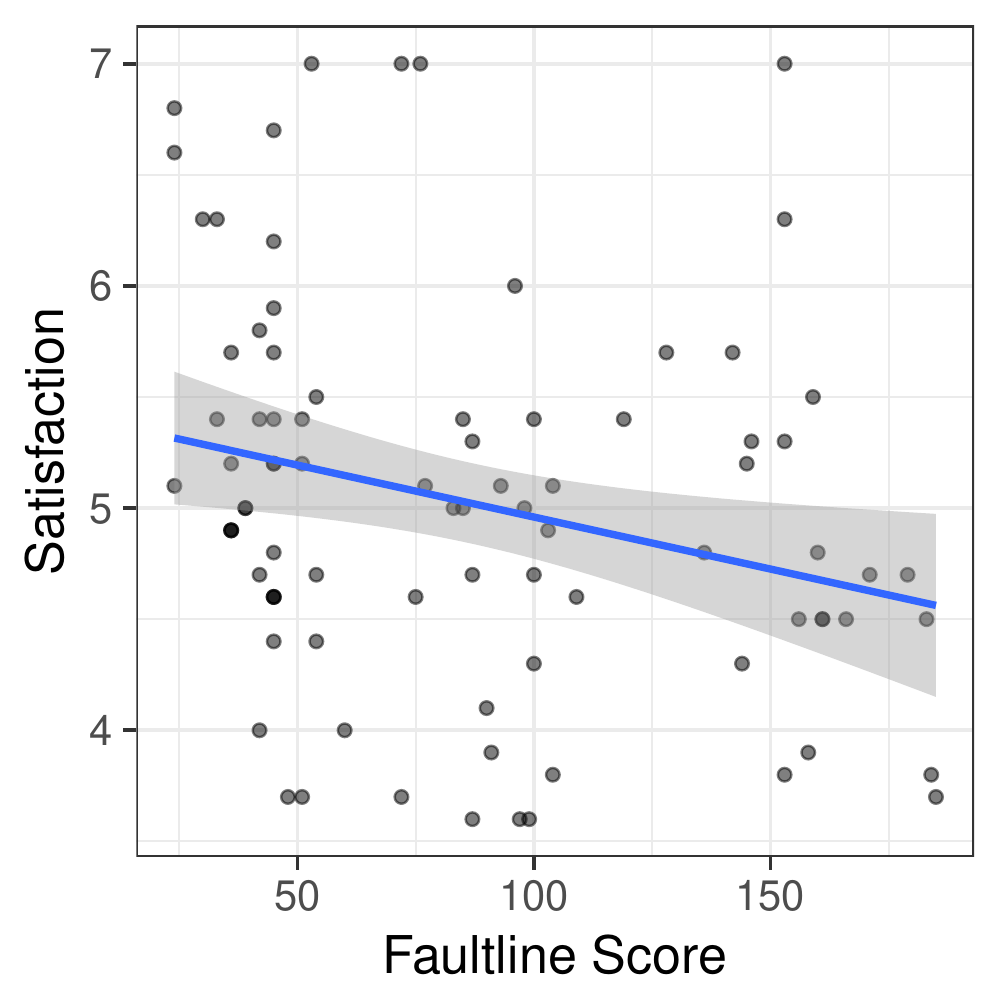}
\label{fig:bia4}
}
\subfloat[Satisfaction Vs {\asw}]{
\includegraphics[scale=.35]{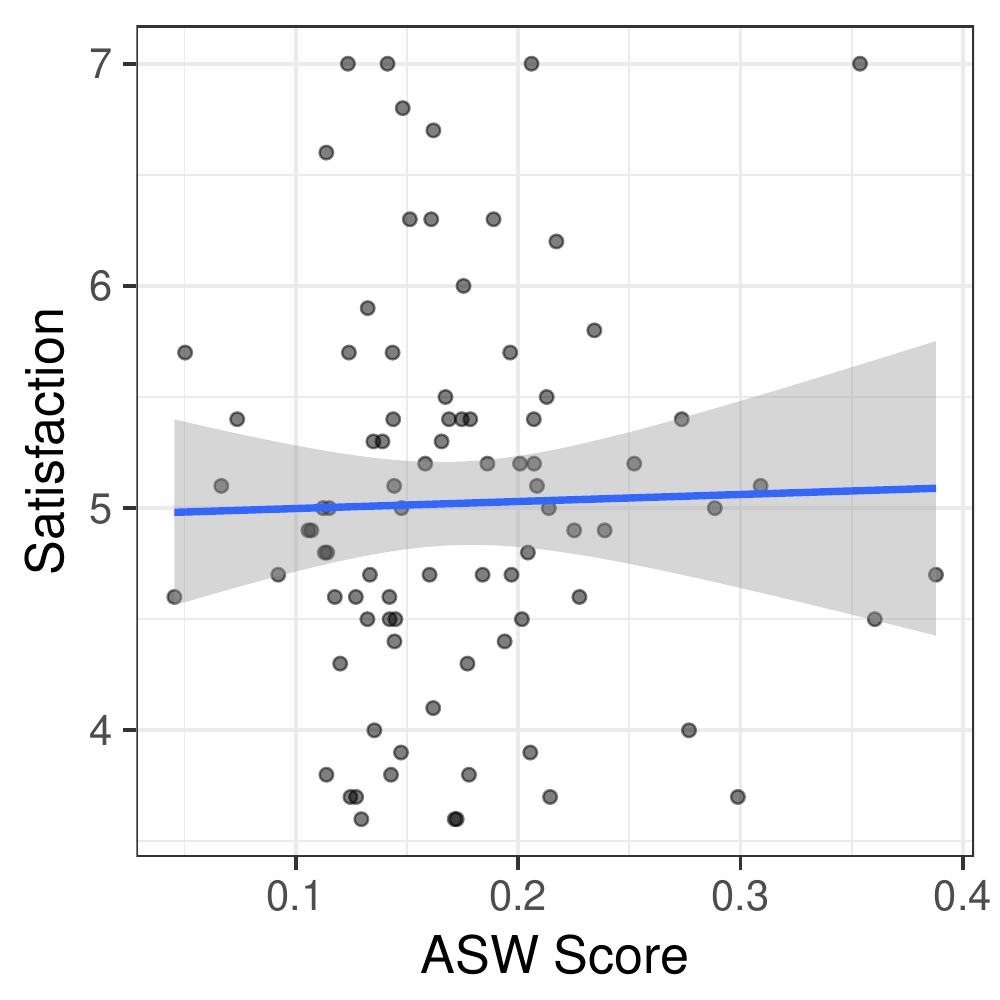}
\label{fig:bia5}
}
\subfloat[Satisfaction Vs {\sss}]{
\includegraphics[scale=.35]{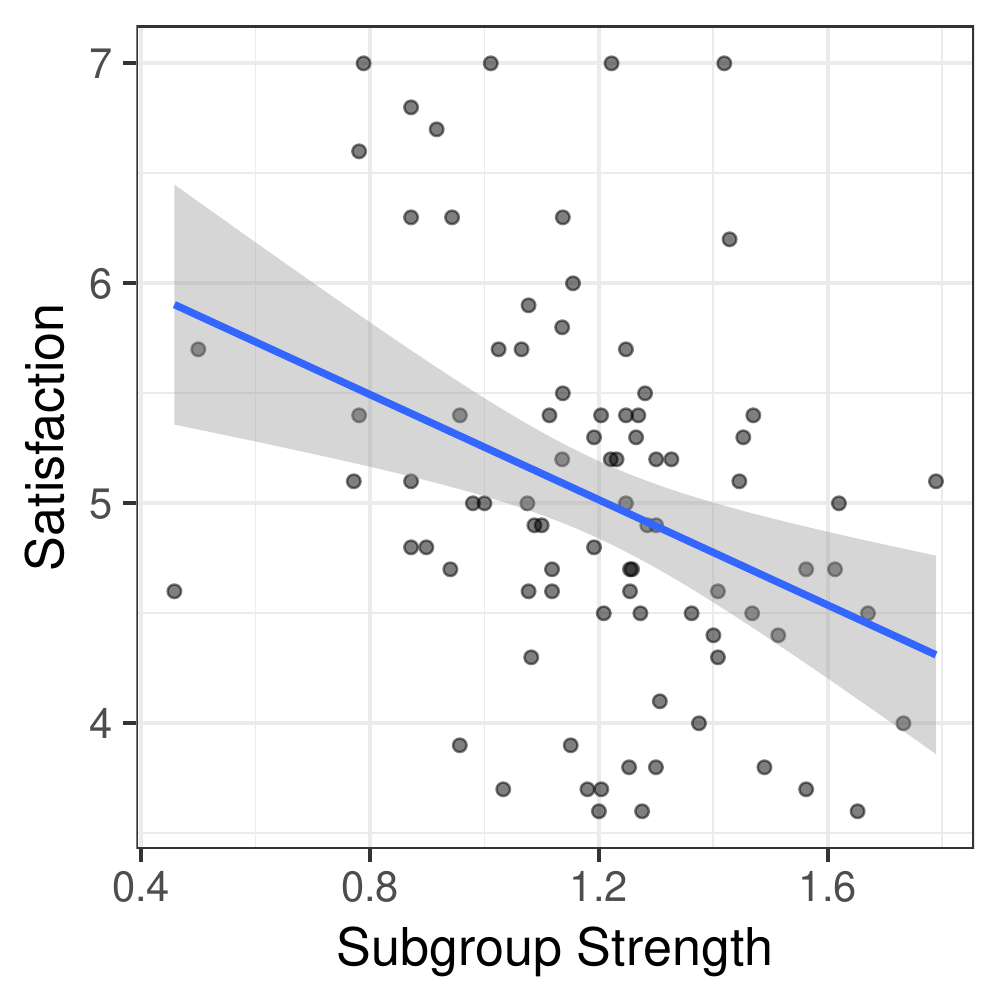}
\label{fig:bia6}}
\end{tabular}
\end{center}
\caption{\label{biall}The association of the {\ct}, {\sss}, and {\asw} measures with team performance and satisfaction.}
\end{figure}

The results verify that the teams' overall faultline-strength, as measured by the {\ct} measure, has a strong
negative association with meaningful outcomes. Next, we demonstrate how a practitioner can examine feature-specific faultlines to identify specific features that are associated with each outcome.

Each of the two outcomes (performance and satisfaction) serves as the dependent variable in a separate regression that also includes the team's faultline potential with respect to different features, according to the {\ct} measure. We also consider multiple
control variables that could account for part of the variance in the dependent variable. We present the results of both regressions in Table~\ref{table:regression}. 

\begin{table}[htp] 
\footnotesize
\centering 
  \caption{Regression Results} 
  \label{} 
   \renewcommand{\arraystretch}{0.65}
   \renewcommand{\arraystretch}{0.9}
\begin{tabular}{@{}l@{}D{.}{.}{-3} @{}D{.}{.}{-3} } 
\\[-1.8ex]\toprule

 & \multicolumn{2}{c}{\textit{Dependent variable:}} \\ 
\cline{2-3} 
\\[-1.8ex] & 
{Grade} & 
{Satisfaction} \\ 
\\[-1.8ex] & 
{(1)} & 
{(2)}\\ 
\hline \\[-1.8ex] 
 Degree & -10.887^{***}  & -1.195^{***} \\ 
  & (-3.793)& (-4.937) \\ 
  & & \\ 
 BS Major & -12.821^{**} & -0.676 \\ 
  & (-2.580 )& (-1.612) \\ 
  & & \\ 
 Gender & 0.106 & 0.147 \\ 
  & (0.028 )& (0.457) \\ 
  & & \\ 
 Country & -8.010^{***} & -1.121^{***} \\ 
  & (-3.001 )& (-4.977) \\ 
  & & \\ 
 ML Exp & -5.428 & -1.029^{*} \\ 
  & (-0.841 )& (-1.889) \\ 
  & & \\ 
 Analytics Exp & 5.233 & -0.063 \\ 
  & (1.082 )& (-0.155) \\ 
  & & \\ 
 Programming Exp & -4.464 & 0.336 \\ 
  & (-0.926 )& (0.827) \\ 
  & & \\ 
 Team Exp & 1.344 & -0.188 \\ 
  & (0.323 )& (-0.535) \\ 
  & & \\ 
 Average ML Exp & 0.219 & -0.105 \\ 
  & (0.185 )& (-1.052) \\ 
  & & \\ 
 Average Analytics Exp & -0.241 & -0.143 \\ 
  & (-0.205 )& (-1.438) \\ 
  & & \\ 
 Average Prog Exp & -1.820^{*}  & 0.014 \\ 
  & (-1.758)& (0.155) \\ 
  & & \\ 
 Average Team Exp & 0.898 & 0.040 \\ 
  & (0.772 )& (0.405) \\ 
  & & \\ 
Constant & 97.821 & 7.308^{***} \\ 
  & (10.586 )& (9.374) \\ 
  & & \\ 
\hline \\[-1.8ex] 
Observations & \multicolumn{1}{c}{86} & \multicolumn{1}{c}{86} \\ 
R$^{2}$ & \multicolumn{1}{c}{0.313} & \multicolumn{1}{c}{0.467} \\ 
Adjusted R$^{2}$ & \multicolumn{1}{c}{0.200} & \multicolumn{1}{c}{0.379} \\ 
Residual Std. Error (df = 73) & \multicolumn{1}{c}{8.126} & \multicolumn{1}{c}{0.686} \\ 
F Statistic (df = 12; 73) & \multicolumn{1}{c}{2.766$^{***}$} & \multicolumn{1}{c}{5.327$^{***}$} \\ 
\bottomrule
\textit{Note:}  & \multicolumn{2}{@{} l}{The dependent variable are grade and} \\ 
\textit{}  & \multicolumn{2}{@{} l}{satisfaction.} \\ 
 \textit{}  & \multicolumn{2}{@{} l}{t-statistics are shown in parentheses.} \\ 
\textit{Significance levels:}  & \multicolumn{2}{@{} l}{$^{*}$p$<$0.1; $^{**}$p$<$0.05; $^{***}$p$<$0.01} \\ 

\end{tabular} 
\label{table:regression}
\end{table}

The table reveals strong negative correlations of the faultline scores for the features \emph{country}, \emph{BS major}, and \emph{current degree} with performance. 
This implies that the existence of potentially conflicting groups in these features can be detrimental to the team's grade. We observe similar trends for the 
\emph{country} and \emph{current degree} features in the context of team satisfaction. Such findings can inform the instructor about the existence of 
potentially problematic dimensions and guide his efforts to strategically design the teams. In practice, this type of regression can be used before solving an instance
of the {\partition} problem, in order to identify the dimensions that need to be considered during the optimization. This is a critical step, as trying to solve
for all possible dimensions is likely to limit the solution space and eliminate high-quality teams due to the existence of faultlines in trivial (non-influential) dimensions.

\section{Generalizing the penalization scheme of aligned conflict triangles}
\label{sec:alignments}
As mentioned earlier our definition of faultline potential (summarized in Equation~\ref{eq:act}) applies, for each conflict triangle, a penalty that is directly proportional to the triangle's alignment across the features. Our experimental results presented in Section~\ref{exp:660} demonstrate that this penalization scheme yields a metric that is a strong predictor of a team's success. However, one might argue that in a specific domain or application, different degrees of alignment should be penalized using a different scheme. In this section, we extend our framework by (1) demonstrating how different penalization scheme of aligned conflict triangles can be implemented, (2) describing a methodology that allows practitioners to \emph{learn} the appropriate penalization scheme for their domain based on information from existing teams in the same domain, (3) studying the \partition\ problem under a given penalization scheme.

\subsection{Faultine potential with a generalized penalization scheme}
Given a team $T$ with a set of features $\mathcal{F}_T$, we define the faultline potential of a team given a penalization 
scheme $g(.)$ as:

\begin{equation}
PCT(T, g)=\sum_{x=1}^{|\mathcal{F}_T|} g(x) \times aligned(x,T),
\label{eq:genact}
\end{equation}

where $aligned(i,T)$ returns the number of conflict triangles that are aligned across exactly $x$ features in $\mathcal{F}_T$. 
The above formulation allows us to flexibly penalize the existence of aligned conflict triangles by selecting the
appropriate $g(x)$ penalty for each value of $x$. Naturally, it makes sense to define $g(x)$ as an ascending function to reflect the fact that higher alignment should translate to a higher faultline potential.
Note that if define $g(x) = x$, then the obtained faultine potential is equivalent to our original definition of $CT(t)$ presented in Equation~\ref{eq:act} (module some constant).  

\subsection{Learning the penalization scheme}
The task of learning the appropriate penalty parameters can be modeled 
as a supervised learning task. Each team serves as a data point in the training set. More specifically, the predictive variables are the $aligned(x,T)$ values for increasing values of $x$. The dependent variable should reflect the degree to which a team's performance is influenced by faultlines. We compute the dependent variables using the following technique. Given a set of teams along with any success metric that encodes their outcome in a particular domain (e.g. performance, satisfaction, cohesion), we obtain the dependent variables by negating the success scores and normalizing them to have a mean equal to $0$ and a standard deviation equal to $1$. The goal is then to learn the penalty-parameters $g(x)$ that best fit the data. To achieve this, we train a linear regression to obtain the best $g(x)$ values. It is important to mention that fitting the linear regression may lead to negative $g(x)$ values. This does not create any issues, but if practitioners desire to obtain faultline potential values that are always positive, they can simply add a constant to all $g(x)$ values. This is a safe operation as it simply adds a constant value to all fautline potential values and does not affect the difference between teams' faultline potentials. In fact, in our experiments we always add a constant value to all $g(x)$ parameters to ensure that $g(0)$ is equal to $0$. This makes the penalization scheme more interpretable as we expect the penalty of conflict-free triangles to be $0$.

If the practitioner has no access to numeric outcomes variables, we can still learn $g(x)$ as follows. The learning task can be modeled as a classification task with a binary variable that is equal $1$ for all actual teams in the data. The training data is then complemented by randomly-populated ``noise'' groups that do not represent actual teams. The binary dependent variable for these fake teams is $0$. In this case, the goal is to find the penalty-parameters that best differentiate between actual and noise teams. This technique builds upon the fact that in most cases, individuals (and managers) tend to form teams that have a lot degree of conflict and faultline potential. 

To demonstrate the effectiveness of our proposed learning procedure, we use the \bia\ dataset as it consists of a set of teams along with two outcome scores, namely ``grade'' and ``satisfaction''. Table~\ref{table:learning} summarizes the $g(x)$ values we obtained using the techniques described above. The first two rows correspond to the $g(x)$ values obtained from the grade and satisfaction metrics. The third row corresponds to $g(x)$ values calculated from our binary classification task (without using any outcome scores). The fourth rows corresponds to $g(x)$ values obtained on a version of \bia\ dataset in which outcome score of each team is randomly sampled from the set $\{0, 1\}$. This row helps verify that the results of the other rows is significant and not due to chance. 

\begin{table*}
  \centering
  \topcaption{Obtained penalization schemes using the \bia\ dataset}
  \begin{tabular}{@{}l|@{}cccccccc@{}}
  \toprule  
   & $\mathbf{g}(1)$ & $\mathbf{g}(2)$ & $\mathbf{g}(3)$ &
                         $\mathbf{g}(4)$ & $\mathbf{g}(5)$ & $\mathbf{g}(6)$ &
                         $\mathbf{g}(7)$ & $\mathbf{g}(8)$\\
  \midrule
  \midrule
  \textbf{Grade} & 0.091 & 0.064 & 0.053 & 0.112 & 0.165 & 0.233 & 0.171 & 0.111 \\
  \textbf{Satisfaction}  & 0.088 & 0.07 & 0.028 & 0.141 & 0.079 & 0.253 & 0.208 & 0.133 \\
  \textbf{Real Vs. Fake}  & 0.068 & 0.099 & 0.079 & 0.061 & 0.115 & 0.184 & 0.223 & 0.171 \\
  \textbf{Random} & -0.063 & 0.021 & 0.428 & -0.041 & -0.153 & 0.053 & 0.142 & 0.098 \\
  \midrule
  \midrule
  \textbf{Frequencies} & 0.2\% & 1.5\% & 6.8\% & 12.9\% & 17.0\% & 11.6\% & 4.5\% & 0.7\%\\
  \bottomrule
  \end{tabular}
  \label{table:learning}
\end{table*}

Note that the first three rows in Table~\ref{table:learning} share a similar trend (and for the most part)
the numbers are ascending representing that the higher degrees of alignment should be penalized more. On the other
hand, we can see that the values in the last row are significantly different and do not exhibit any meaningful pattern.
We can observe that the $g(x)$ values reported in the first $3$ rows, while following the expected trend, sometimes fluctuate. For example, the values of $g(8)$ are smaller than $g(7)$. This can be explained using the last row of the table which summarizes the frequencies of each degree of the alignment in the entire dataset. For instance, we can see that in the entire dataset, there are only 0.7\% of triangles that can form $8$ aligned conflicts. This means, that in our learning task this value is in almost all cases set to $0$ for both successful and unsuccessful teams. Thus, the parameters learned using the linear regression are more subject to noise. In fact, if we focus only on degrees of alignments that have at least 5\% presence in the data, we can see that the $g(x)$ values are more robust and conform to our expected behaviour. 

\subsection{Team-formation under the generalized penalization scheme}
As we discussed in Section~\ref{sec:operationalize}, solving the \partition\ problem for a large
group of individuals requires an operationalized notion of faultline that can be (1) computed in
linear time and (2) updated in constant time when a member joins or leaves the team. Unfortunately, these
two criteria may not hold for a given penalization scheme. In fact given a team $T$, computing the $PCT(T, g)$
requires a running time of $O(m|T|^3)$. This is because our speed-up technique described 
in Section~\ref{sec:eff} can not be applied to any penalization scheme. This makes the \partition\ problem
even more challenging to solve as it becomes computationally expensive. The \matching\ algorithm can still 
be used to solve the \partition\ problem given any penalization scheme, but the solution does not scale
up to large population of individuals. Given that, we present some theoretical and experimental evidence to
demonstrate that solving the \partition\ problem with our original penalization scheme produces teams that are
of high-quality under different penalization schemes as well. Of course, directly solving the \partition\ problem with
a given penalization scheme can produce better results, but in most cases the slight improvement can not justify
the huge required computational cost.

Let us use $CT(T)$ and $PCT(T, g)$ to refer to the definition of faultline potential (according to Equation~\ref{eq:act}) and the faultline potential given a penalization scheme
(according to Equation~\ref{eq:genact}) respectively. Now, it is easy to show that $$CT(T) * \max(g(x)) * m \ge PCT(T, g).$$
The above equation simply states that in the worst-case scenario all $m$ features of conflicting individuals form a conflicting triangle. This is an strict upper bound for $PCT(T, g)$.
Although this may not be a tight bound, it suggest that optimizing $CT(T)$ directly might be an efficient strategy for solving the \partition\ problem under any penalization scheme.

The following experiment further demonstrates that optimizing the original faultline potential (presented in Equation~\ref{eq:act} is quite aligned with optimizing
faultline potential under a given penalization scheme. In this experiment, we have solved the \partition\ problem on the \bia\ dataset using the penalization scheme from the first row of Table~\ref{table:learning}. More precisely, we ran the \matching\ algorithm to create $50$ teams of equal size. In each iteration of the algorithm, we recorded the faultline potential according to Equation~\ref{eq:act}. Figure~\ref{fig:ctcompare} illustrates how the value of $CT(T)$ and $PCT(T, g)$ compare as the optimization proceeds. We can see that the $PCT(T, g)$ has an almost linear relationship with our original definition of faultline potential. This
implies that by solving the \partition\ problem using our original penalization scheme, we can benefit from the speed-up techniques we introduced in Section~\ref{sec:eff} without sacrificing the quality of
the obtained teams even if a different penalization scheme is desired. 

\begin{figure}
    \centering
    \includegraphics[scale=0.45]{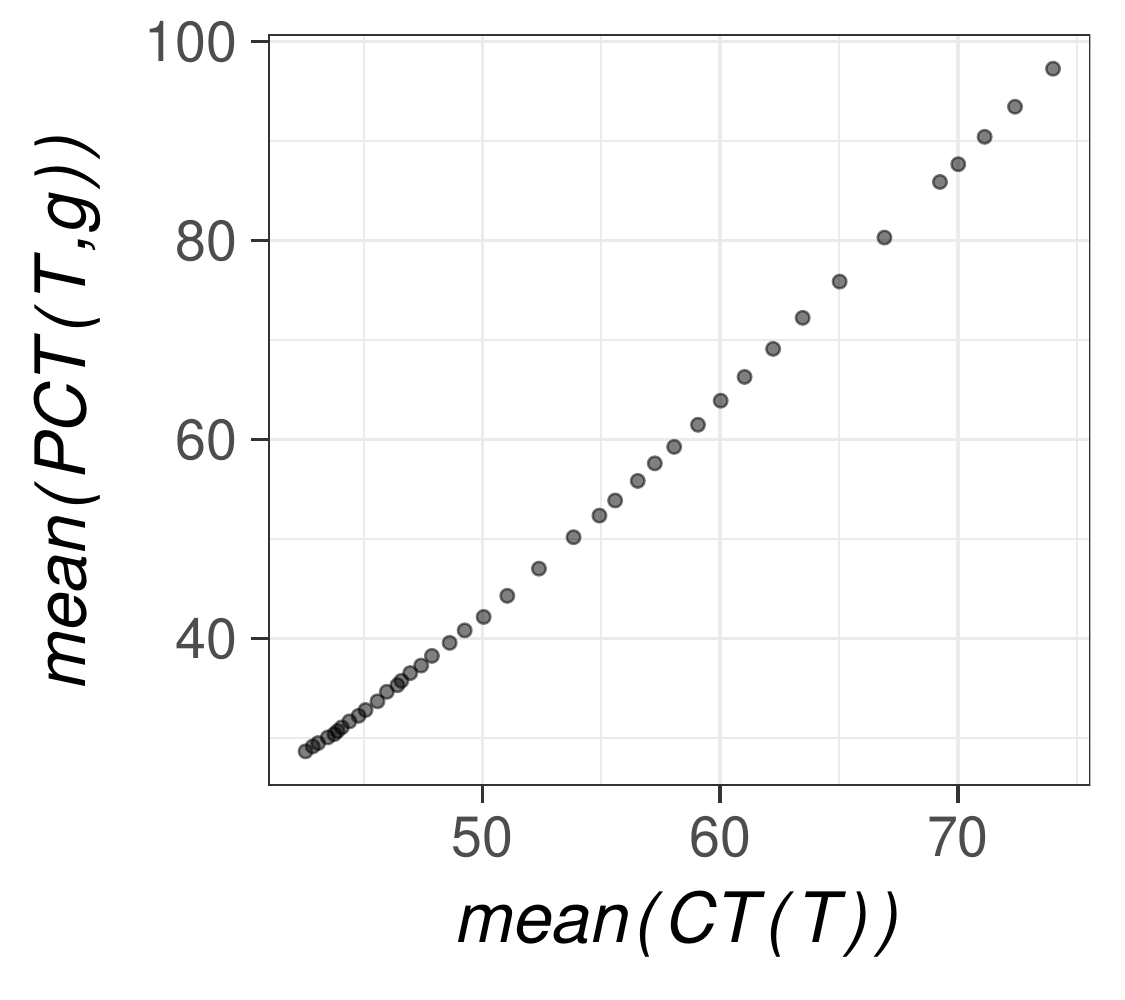}
    \caption{Comparing the faultline potential using different penalization schemes}
    \label{fig:ctcompare}
\end{figure}


\section{Handling numeric attributes}
\label{sec:numer}
One of the limitations of {\ct} is that it is
primarily designed for nominal attributes. Thus, numerical attributes need to be discretized into bins
prior to computing the faultline score. The ability to handle multimodal data is a well-known challenge in faultline
measurement. For instance, the popular {\asw} approach has to pre-process the data by using dummy
variables to encode categorical variables as numeric. Next, we present two techniques to extend our
basic {\ct} model to deal with numerical attributes.

The first technique is based on binning, but aims to creates bins of variable
length that can accurately capture the distribution of the underlying data. More precisely, a pre-processing
module based on Kernel Density Estimation (KDE) could automate the discretization process and deliver
dynamic segmentations that accurately capture the distribution of numeric variables~\citep{rudemo1982empirical}.
The resulting bins would then represent the natural groups of numeric values that are present in the given dataset.

An alternative technique that departs from the standard binning paradigm 
would be to use a threshold $\gamma$ to define agreement and disagreement between team members. 
Specifically, given a numeric feature $f$, we say two
individuals $i$ and $i'$ are in agreement \emph{iff} $|\worker_i(f) - \worker_{i'}(f)| \le \gamma$. Otherwise, the 
two individuals are considered to be in a disagreement. As before, a triangle is identified as a
conflict triangle with respect to feature $f$ if two of each members agree on feature $f$ and disagree with the third individual in the triangle. The problem then translates into the task of selecting an appropriate value for $\gamma$. Domain knowledge is a key factor
in this effort, as each feature is likely to have its own threshold. For instance, while a difference of 2 years for the \emph{age}
feature is generally considered small, a difference of 2 stars in the context of the popular 5-star rating scale is far more significant. An intuitive way to set feature-specific thresholds would be to assume that two members agree on feature $f$ 
if the difference of their corresponding values is within 1 standard deviation of the same feature (as computed across the entire population that we want to partition into teams). A second way to tune the feature-specific thresholds is to use a validation set that includes the scores of teams for meaningful team outcomes, such as performance or satisfaction. We used such a dataset in Section~\ref{exp:660}. We can then choose the threshold values that maximize the correlation between the resulting faultline and outcomes scores.

While the above methods allow us to flexibly model (dis)agreements and address numeric attributes during the computation of conflict triangles, they do not directly model the \emph{degree} of disagreement between two team members in the context of a numeric feature.
For instance, a conflict triangle with two members in their 20s and one in their 30s tends to be less problematic than a triangle with two member in their 20s and one in their 60s.
To address this issue, we can weigh (the disagreement in) a conflict triangle by directly using the 
numeric values of its members. In practice, the weight of a conflict triangle with respect to feature $f$
would then be equal to the the average absolute difference between the values of feature $f$ for the two individuals in disagreement. The {\ct} measure would
then be expressed as a weighted sum, rather than the pure number of conflict triangles in the given team. 
Combining this method
with the two techniques that we discussed above (or with any techniques based on binning or definitions of disagreement) 
enables us to comprehensively extend our approach to handle numeric attributes.

\section{Discussion}\label{sec:conclusions}
Our work focuses on the previously unexplored overlap between the decades of work on team faultlines and the rapidly
growing literature on automated team-formation. We formally define the {\partition} problem, which is the first problem definition that asks for the formation of teams with minimized faultlines from a large population of candidates. We present a detailed complexity analysis and introduce a new faultline-minimization algorithm ({\matching}) that outperforms competitive baselines in an experimental 
evaluation on both real and synthetic data.  

One of the major challenges that we address in this work is finding a faultline measure
that can be efficiently applied to faultline optimization. As we highlight in this paper, computational efficiency (in a practical team-formation setting) translates into two requirements that an appropriate measure should satisfy: (i) the ability to compute the faultline score of a team in linear time, and (ii) the ability to update a team's score in constant time after small changes to the team (e.g. the removal or addition of a member).
The relevant literature has described multiple operationalizations of the faultline concept. However, as we discuss in detail in Section~\ref{sec:related}, these operationalizations do not satisfy these requirements and are only appropriate for measuring faultline strength in \emph{existing} teams. As such, they are not scalable enough to serve as the objective function of a combinatorial algorithm that has to process a large population and evaluate very large numbers of candidate-teams in order to find a faultline-minimizing solution. Therefore, we introduce a new measure that we refer to as \emph{Conflict Triangles} ({\ct}). The {\ct} measure is based on the extensive literature on modeling social structures and is consistent with the fundamental principles of faultline theory by~\citep{lau1998demographic}. In addition, {\ct} satisfies the two efficiency requirements and is appropriate for 
faultline-optimization algorithms.

\subsection{Implications}
Our work is the first to incorporate the faultline concept into an algorithmic framework for automated team-formation. 
From a team-builder's perspective, the ability to control the faultlines of teams that are automatically sampled from 
a large population of candidates has multiple uses. First, it allows the team builder to proactively reduce the risk 
of undesirable outcomes that have been consistently linked with faultlines, such as conflicts, polarization, and disintegration. Second, it provides an effective way to manage the diversity within a team. A trivial way to eliminate faultlines
is to create highly homogeneous teams. However, this approach would also lead to teams that are unable to benefit
from the well-documented benefits of diversity, such as innovation and increased performance~\citep{kearney2009and,roberge2010recognizing,van2003joint}. In order to avoid such shortcomings, a team-builder can utilize our algorithmic framework 
to strategically engineer low-faultline teams without over-penalizing diversity. A characteristic example is a team that
is maximally diverse; a team in which no two individuals share a common attribute. Consistent with the faultline theory
by~\citep{lau1998demographic}, our framework would recognize this as a team with the same faultline potential as a 
perfectly homogeneous team. We demonstrate this via examples in Figures~\ref{fig:theseauthors} and~\ref{fig:groups}.

Our team-partitioning paradigm has applications in both an organizational and educational setting. In a firm setting, the task
of partitioning a workforce into teams is common. By using the proposed {\matching} algorithm, a manager can identify 
faultline-minimizing partitionings within the multidimensional space defined by various employee features. A regression analysis, such 
as the one we described in Section~\ref{exp:660}, can guide the manager's team-building efforts by selecting specific features with potentially problematic faultlines. In a classroom setting, instructors often face the task of partitioning their students into teams
for assignments and projects. As we demonstrated in our experiments, faultlines in student teams can have a strong association with meaningful outcomes, such as performance and member satisfaction. By releasing our team-partitioning software, we hope that we can
automate this team-formation task and benefit both students and instructors.

\subsection{Directions for Future Work}
Future work could focus on algorithms
that combine faultlines minimization (either as an objective function or via constraints) with other
factors, such as intra-team communication, skill coverage, and recruitment cost. Such work would add
to the rapidly growing literature on automated team formation, which we review in Section~\ref{sec:auto}.
We expect this to be a challenging task from an optimization perspective, as additional constraints
can be hard to satisfy while trying to avoid the creation of faultlines. 
For instance, if the distribution of skills is strongly correlated with the population's demographics,
a homogenous team is unlikely to exhibit a diverse skillset. Hence, the ability to leverage both homogeneity
and diversity will be an asset for such efforts. 

The proposed {\matching} algorithm can be combined with any faultline measure that follows the
efficiency principles that we describe in this work (i.e. linear computation and constant
updates). Future work on such measures is essential, as existing measures are not scalable
enough for optimization purposes. We make our own contribution in this direction via The
{\ct} measure that we propose in this work.

In conclusion, we hope that future efforts will be able to build on our work to address challenging problems
that combine efficient algorithmic constructs for automated team-formation with the rich findings on the causes and effects of teams faultlines. 

\section*{Acknowledgement}
This research was supported in part by NSF grants IIS-1813406 and CAREER-1253393.

\bibliographystyle{abbrvnat}

\end{document}